\providecommand{\keywords}[1]
{ \medskip  
  \noindent
  \small	
  \textbf{\textit{Keywords---}} #1
}
\newcommand{\cost}{{\rm{cost}}}
\newcommand{\PoA}{{\rm{PoA}}}
\newcommand{\PoS}{{\rm{PoS}}}
\newcommand{\NE}{{\rm{NE}}}
\newcommand{\s}{\mathbf{s}}
\newcommand{\ord}{\mathrm{ord}}
\newcommand{\revised}[1]{\textcolor{black}{#1}}
\newcommand{\PoSLsum}{n+{1}/{n}-1}
\newtheorem{lemma}{Lemma}
\newtheorem{theorem}{Theorem}
\newtheorem{proposition}{Proposition}
\title{Capacitated Network Design Games \\on a Generalized Fair Allocation Model\thanks{This work is partially supported by JSPS KAKENHI Grant Numbers JP20H05967, JP21H05852, JP21K19765, JP21K17707, JP22H00513. The extended abstract of this paper appears in Proceedings of the 21st International Conference on Autonomous Agents and Multiagent Systems
(AAMAS 2022) \cite{HHO2022}.}}
\author[1]{Tesshu Hanaka \thanks{\texttt{hanaka@inf.kyushu-u.ac.jp}}}
\author[2]{Toshiyuki Hirose\thanks{\texttt{ts-hirose@kddi.com}}}
\author[3]{Hirotaka Ono\thanks{\texttt{ono@nagoya-u.jp}}}
\affil[1]{Kyushu University, Fukuoka, Japan}
\affil[2]{KDDI Corporation, Tokyo, Japan}
\affil[3]{Nagoya University, Nagoya, Japan}
\date{}
\begin{document}
\maketitle

\begin{abstract}
\noindent

The cost-sharing connection game is a variant of routing games on a network. 
In this model, given a directed graph with edge costs and edge capacities, each agent wants to construct a path from a source to a sink with low cost. 
The users share the cost of each edge based on a cost-sharing function. 
One of the simple cost-sharing functions is defined as the cost divided by the number of users.
Most of the previous papers about cost-sharing connection games addressed this cost-sharing function.
It models an ideal setting where no overhead arises when people share things, though it might be quite rare in real life; it is more realistic to consider the setting that the cost paid by an agent is the original cost per the number of agents using the edge plus the overhead. 
In this paper, we model the more realistic scenario of cost-sharing connection games by generalizing cost-sharing functions. 
The arguments on the model are based on not concrete cost-sharing functions but cost-sharing functions under a reasonable scheme;  they are applicable for a broad class of cost-sharing functions satisfying the following natural properties: they are (1) non-increasing, (2) lower bounded by the original cost per the number of the agents, and (3) upper bounded by the original cost, which enables to represent various scenarios of cost-sharing. 
We investigate the Price of Anarchy (PoA) and the Price of Stability (PoS) under sum-cost and max-cost criteria with the generalized cost-sharing function. Despite the generalization, we obtain the same tight bounds of PoA and PoS as the cost-sharing with no overhead except PoS under sum-cost. Moreover, for the sum-cost case, the lower bound on PoS increases from $\log n$ to $n+1/n-1$ by the generalization, which is also almost tight because the upper bound is $n$. We further investigate the bounds from the viewpoints of graph classes, such as parallel-link graphs, series-parallel graphs, and directed acyclic graphs, which show critical differences in PoS/PoA values.

\keywords{Capacitated network design games \and Cost-sharing games \and Nash equilibrium \and Price of anarchy \and Price of stability}
\end{abstract}

\section{Introduction}\label{sec:Intro}
The capacitated symmetric cost-sharing connection game (CSCSG) is a network design model of multiple agents' sharing costs to construct a network infrastructure for connecting a given source-sink pair. In the game, a possible network structure is given, but actual links are not built yet. For example, imagine building an overlay network structure on a physical network. Each agent wants to construct a path from source $s$ to sink $t$. To construct a path, each agent builds links by paying the associated costs. Two or more agents can commonly use a link if the number of agents is within the capacity associated with the link, and 
in such a case, the link's cost is fairly shared by the agents that use it. 
Thus, the more agents share one link, the less cost of the link they pay. 
Under this setting, each agent selfishly chooses a path to construct to minimize their costs to pay. 
The CSCSG is quite helpful and can model many real-world situations for sharing the cost of a designed network, such as a virtual overlay, multicast tree, or other sub-network of the Internet. The CSCSG is first introduced by Feldman and Ron~\cite{Feldman2015}. 

In the previous studies, the link cost is fairly shared, 
which means that the total cost paid for a link does not vary even if any number of agents use it.  However, sharing resources yields more or less extra costs (overheads) in realistic cost-sharing situations; by increasing the number of users, extra commission fees are charged, service degradation occurs, and so on. 
The existing models are not powerful enough to handle such situations.


In this paper, we model the more realistic scenario of CSCSG by generalizing cost-sharing functions. 
The arguments on our model do not depend on specific cost-sharing functions, and are applicable to a broad class of cost-sharing functions satisfying certain natural properties. 
Let $p_e$ and $c_e$ be the cost and capacity associated with link (edge) $e$, respectively. Suppose that $x$ agents use link $e$, where $x \le c_e$. In our model,  a cost-sharing function $f_e(x)$ for link $e$ is (1) non-increasing with respect to $x$, 
(2) $f_e(x) \ge p_e/x$, and
(3) $f_e(1) =p_e$. 
Condition (1) is a natural property in cost-sharing models, (2) represents 
the situation that if two or more agents use a link, overheads may arise, and (3) represents that 
no overhead arises when only an agent uses the edge. Note that (2) implies that (2') the total cost paid by all the agents for $e$ is at least $p_e$. Also note that by combining properties (1) and (3), we have (3') $f_e(x) \le p_e$ for any positive integer $x$, which implies that the overhead is not too large and a cost paid by an agent is upper bounded by $p_e$; otherwise no one wants to cooperate. We emphasize that this significant generalization does not restrict any nature of fair cost-sharing. Any natural fair cost-sharing function seems to be in this scheme.  
Note that the cost-sharing function in the previous studies~\cite{Feldman2015,Erlebach2015,Feldman2016} is $f_e(x) = p_e/x$, which satisfies (1), (2) and (3). 



We investigate the Price of Anarchy (PoA) and the Price of Stability (PoS) of the game. 
A pure Nash equilibrium (we simply say Nash equilibrium) is a state where no agent can reduce its cost by changing the path that he/she currently chooses. Such a Nash equilibrium does not always exist in a general game, but it does in CSCSG. 
Thus, a major interest in analyzing games is to measure the goodness of Nash equilibrium. As social goodness measures, we consider two criteria. One is \emph{sum-cost} criterion, where the social cost function is defined as the sum of the costs paid by all the agents, and the other is \emph{max-cost} criterion, which is defined as the maximum among the costs paid by all the agents.
Both PoA and PoS are well-used measures for evaluating the efficiency of Nash equilibria of games. The PoA is the ratio between the cost of the worst Nash equilibrium and the social optimum, whereas the PoS refers to the ratio between the cost of the best Nash equilibrium and the social optimum. 

The previous studies also investigate the PoA and PoS of these cost criteria under their game models. For details, see Section \ref{subsec:related}.  
Despite of the generalization, we
obtain the same bounds of PoA and PoS as the cost-sharing with no overhead except PoS of sum-cost.
Note that these bounds are tight. In the case of sum-cost, the lower bound on PoS increases from $\log n$ to $\PoSLsum$ by the generalization, which is also almost tight because the upper bound is $n$.
We further investigate the bounds from the viewpoints of graph classes, such as parallel-link graphs, series-parallel graphs, and directed acyclic graphs, which show critical differences in PoS/PoA values. The details are summarized in Section \ref{subsec:contribution}.

\begin{table*}[t]
\centering
\caption{The summary of PoA and PoS of CSCSG  under sum-cost criterion.}
\renewcommand{\arraystretch}{1.2}
\resizebox{\textwidth}{!}{ 
\begin{tabular}{cc|c|c|c|c|}
\cline{3-6}
                                                     &                      & \multicolumn{1}{c|}{parallel-link} & \multicolumn{1}{c|}{series-parallel} &
                                                \multicolumn{1}{c|}{DAG} &
                                                \multicolumn{1}{c|}{General}   \\ \hline
                                                     
\multicolumn{1}{|c|}{\multirow{2}{*}{Uncapacitated}} & \multirow{1}{*}{PoA} &           \multicolumn{4}{c|}{$n$ \cite{Anshelevich2008}    }                  \\ \cline{2-6}
\multicolumn{1}{|c|}{}                               & \multirow{1}{*}{PoS} &    \multicolumn{4}{c|}{$1$ (trivial)    }        \\  \hline

\multicolumn{1}{|c|}{\multirow{2}{*}{Capacitated}}   & \multirow{1}{*}{PoA} &    \multicolumn{2}{c|}{$n$  ({UB} \cite{Feldman2015}, LB \cite{Anshelevich2008})}   
& \multicolumn{2}{c|}{\bf unbounded [Thm. \ref{thm:DAG:sum}]}  \\ \cline{2-6} 
\multicolumn{1}{|c|}{}                               
& \multirow{1}{*}{PoS} &          
\multicolumn{4}{c|}{$\log n$ \cite{Feldman2015}\footnotemark[1]}                    \\ \hline

\multicolumn{1}{|c|}{Capacitated+General cost}                               & \multirow{1}{*}{PoA} & 
\multicolumn{2}{c|}{\bf $n$  [Thm. \ref{thm:scpoa}]   } & \multicolumn{2}{c|}{\bf unbounded [Thm. \ref{thm:DAG:sum}]}    \\ \cline{2-6} 
\multicolumn{1}{|c|}{{\bf [Our Setting]}}                    & \multirow{1}{*}{PoS} & \multicolumn{4}{c|}{{\bf UB : $n$}, {\bf LB : $\PoSLsum$ [Thm. \ref{thm:PoS_sc}]}  }         \\ \hline

\end{tabular}
}
\label{sc}
\end{table*}


\begin{table*}[t]
\centering
\caption{The summary of PoA and PoS of CSCSG under max-cost criterion.}
\renewcommand{\arraystretch}{1.2}
\resizebox{\textwidth}{!}{ 
\begin{tabular}{cc|c|c|c|c|}
\cline{3-6}
                                                     &                      & \multicolumn{1}{c|}{parallel-link} & \multicolumn{1}{c|}{series-parallel} &
                                                \multicolumn{1}{c|}{DAG} &
                                                \multicolumn{1}{c|}{General}   \\ \hline
                                                     
\multicolumn{1}{|c|}{\multirow{2}{*}{Uncapacitated}} & \multirow{1}{*}{PoA} &           \multicolumn{4}{c|}{$n$ \cite{Anshelevich2008}   }                  \\ \cline{2-6}
\multicolumn{1}{|c|}{}                               & \multirow{1}{*}{PoS} &    \multicolumn{4}{c|}{$1$ (trivial)    }        \\ \hline
\multicolumn{1}{|c|}{\multirow{2}{*}{Capacitated}}   & \multirow{1}{*}{PoA} &    \multicolumn{2}{c|}{$n$ \cite{Feldman2015}}                   & \multicolumn{2}{c|}{\bf unbounded [Thm. \ref{thm:DAG:max}]}  \\ \cline{2-6} 
\multicolumn{1}{|c|}{}                               
& \multirow{1}{*}{PoS} &   
\multicolumn{4}{c|}{$n$  (UB \cite{Erlebach2015}, LB \cite{Feldman2015}\footnotemark[1])}                      \\ \hline

\multicolumn{1}{|c|}{Capacitated+General cost}                               & \multirow{1}{*}{PoA} & 
\multicolumn{2}{c|}{\bf $n$  [Thm. \ref{thm:mcpoa}]  } & \multicolumn{2}{c|}{\bf unbounded [Thm. \ref{thm:DAG:max}]}    \\ \cline{2-6} 
\multicolumn{1}{|c|}{{\bf [Our Setting]}}                    & \multirow{1}{*}{PoS} & \multicolumn{4}{c|}{\bf $n$  [Thm. \ref{thm:mcpos}]}           \\ \hline

\end{tabular}
}
\label{mc}
\end{table*}

\footnotetext[1]{In~\cite{Feldman2015}, Feldman and Ron gave the lower bounds only for \emph{undirected} parallel-link graphs. They  can be easily modified to \emph{directed} parallel link graphs.
} 


\subsection{Our contribution}\label{subsec:contribution}

In this paper, we investigate the PoA and the PoS of CSCSG under a generalized cost-sharing scheme, as explained above.
We address two criteria of the social cost: sum cost and max cost.


As for the sum-cost case, we first show that PoA is unbounded even on directed acyclic graphs (DAGs). 
On the other hand, on series-parallel graphs (SP graphs), we show that PoA under sum-cost is at most $n$ and it is tight, that is, there is an example whose PoA is $n$. For PoS, we show that it is at most $n$, and there is an example whose PoS under sum-cost is $\PoSLsum$. 
This shows the difference from the previous study, which shows 
that PoS is at most $\log n$ and it is tight with ordinary fair cost-sharing functions~\cite{Feldman2015}.

Next, we give the results on the max-cost.
As with the sum-cost, we show that PoA under max-cost is unbounded on directed acyclic graphs. 
On SP graphs, we prove that PoA is at most $n$ and is tight.
We also show that PoS is at most $n$ and is tight.
These results imply that the significant generalization does not affect PoA and PoS under max-cost.
We summarize the results of CSCSG in Tables \ref{sc} and \ref{mc} for sum-cost and max-cost, respectively.

We then discuss the capacitated \emph{asymmetric} cost-sharing connection game (CACSG),  where agents have different source and sink nodes. We show that the lower bounds of PoA and PoS of CSCSG hold for the asymmetric case, while PoS under sum-cost and max-cost are  at most $n$ and $n^2$, respectively.

Remark that we consider the games on directed graphs, but all the results except for directed acyclic cases can be easily modified to undirected cases. In this sense, our results are generic, which includes the results of \cite{Feldman2015,Erlebach2015}.

\subsection{Related work}\label{subsec:related}
The cost-sharing connection game (CSG) is first introduced by Anshelevich et al. \cite{Anshelevich2008}.
The paper gives the tight bounds of PoA and PoS under sum-cost, which are $n$ and $1$, respectively, for \emph{uncapacitated} CSG.  
They also show that the PoS under sum-cost of \emph{asymmetric} CSG, where agents have different source and sink nodes, can be bounded by $\log n$.
Epstein, Feldman, and Mansour study the  \emph{strong equilibria} of cost-sharing connection games \cite{Epstein2009}.
In \cite{FalkenhausenH13}, von Falkenhausen and Tobias propose cost-sharing protocols with generalized cost-sharing functions.

Feldman and Ron \cite{Feldman2015} introduce a capacitated variant of CSGs on undirected graphs. 
For the variant, they give the tight bounds of PoA and PoS under both sum-cost and max-cost for several graph classes, except the PoS under max-cost for general graphs. 
Note that their results only hold for symmetric CSG.
Erlebach and Radoja fill the gap of the exception \cite{Erlebach2015}. 
Feldman and Ofir investigate the strong equilibria for the capacitated version of CSG \cite{Feldman2016}.

In the literature on computing a Nash equilibrium, 
Anshelevich et al. prove that computing a ``cheap'' Nash equilibrium on CSG is NP-complete  \cite{Anshelevich2008}.
Also, Syrgkanis shows that finding a Nash equilibrium on CSG  is PLS-complete \cite{Syrgkanis2010}.

There are {vast applications} of CSG.
A natural application is a decision making in sharing economy \cite{Anshelevich2008,Chau2018,chau2019}. 
Radko and Laclau mention the relationship between CSG and machine learning \cite{Ievgen2019}. 
The previous studies for CSCSG do not consider any overhead, but when we share some resources (or tasks), it 
generally yields some overhead. In fact, controlling overheads to share  tasks is a major issue in grid/parallel computing fields~\cite{goldman2004model}. Furthermore, in the context of sharing economy, the transaction cost is considered a part of overheads~\cite{henten2016transaction}.

\bigskip 

The rest of this paper is organized as follows. 
Section \ref{sec:Model} is preliminary: we give notations and a formal model of CSCSG, 
and so on. 
Sections \ref{sec:SumCost} and \ref{sec:MaxCost} are the main parts of this paper. The first is about PoA and PoS under sum cost, and the second is about PoA and PoS under max cost. Section \ref{sec:asymmetric} discusses the capacitated asymmetric CSG. 

\section{Model}\label{sec:Model}
\subsection{Capacitated Symmetric Cost-Sharing Connection Games}

A capacitated symmetric cost-sharing connection game (CSCSG) $\Delta$ is a tuple: $\Delta = (N, G=(V, E), s, t, \{p_e\}_{e\in E}, \{c_e\}_{e\in E})$
where $N$ is the set of agents and $n=|N|$, 
$G=(V, E)$ is a directed graph, $s,t\in V$ are the \emph{source} and \emph{sink} nodes, $p_e\in \mathbb{R}^{\ge 0}$ is the cost of an edge $e$, and $c_e\in\mathbb{N}^{\ge 0}$ is the capacity of an edge $e$, i.e., the upper bound of the number of agents that can use edge $e$. An edge is also called a \emph{link}. 
The purpose of each agent $j$ is to construct an $s$-$t$ path in $G$. An $s$-$t$ path chosen by agent $j$ is called a \emph{strategy} of agent $j$, denoted by $s_j$. 
A tuple $\s = (s_1, \ldots, s_n)$ of strategies of $n$ agents is called a \emph{strategy profile}.
We denote by $E(s_j)\subseteq E$ the set of edges in $s$-$t$ path $s_j$. Namely, $E(s_j)$ is the set of edges used by agent $j$.
Moreover, for a strategy profile $\s = (s_1, \ldots, s_n)$, we define $E(\s)=\bigcup_j E(s_j)$, which is the set of edges used in $\s$.
%
%
Let $x_e(\s)=|\{j \mid e\in E(s_j)\}|$ be the number of agents who use edge $e$ in a strategy profile $\s$. 
A strategy profile $\s$ is said to be \emph{feasible} if $\s$ satisfies $x_e(\s)\le c_e$ for every $e\in E$. 
Furthermore, we say the game is \emph{feasible} if at least one feasible strategy profile exists. 
In this paper, we deal with only feasible games, that is, games have at least one feasible strategy profile.





\subsection{Cost-Sharing Function and Social Cost}

In a general network design game, for a strategy profile $\s$, every agent $j$ who uses an edge $e$ should pay some cost based on a \emph{payment} function $f_{e,j}(x_e(\s))$; agent $j$ pays $\sum_{e\in E(s_j)}f_{e,j}(x_e(\s))$ in total. In cost-sharing connection games, the cost imposed to an edge $e$ is \emph{fairly} divided into the agents using $e$; the cost paid by an agent using $e$ is determined by a \emph{cost-sharing function} $f_e(x_e(\s))$, and the total cost of agent $j$ is     
\begin{align*}
 p_j(\s)=
 \begin{cases}
  \sum_{e\in E(s_j)}f_e(x_e(\textbf{s})) & \forall e\in E(s_j), x_e(\textbf{s}) \le c_e \\ 
  \infty & \mbox{otherwise}
 \end{cases}.
\end{align*}
In this paper, we assume that a cost-sharing function $f_e(x)$ satisfies (1) \emph{non-increasing}, (2) $f_e(x)\ge p_e/x$ and (3) $f_e(1)=p_e$.  

We denote by $(\Delta, F)$ a CSCSG on $\Delta$ with the set of cost-sharing functions $F=\{f_e \mid e\in E\}$. To emphasize that all functions in $F$ satisfy the properties (1), (2) and (3), we say that $F$ is in the generalized cost-sharing scheme, denoted by $\mathcal{F^*}$. Recall that (2) implies (2’) the total cost paid by all the agents for $e$ is at least $p_e$, and (3’) $f_e(x) \le p_e$ for any $x\ge 1$. If there is no overhead for sharing an edge $e$, the cost agent $j$ pays for $e$ under $\textbf{s}$ is defined by $f_e(x_e(\textbf{s})) = p_e/x_e(\textbf{s})$. Let us denote $F_{\ord}=\{p_e/x_e(\textbf{s}) \mid e\in E\}$, and clearly $F_{\ord}\in \mathcal{F}^*$. Previous studies such as \cite{Anshelevich2008,Epstein2009,Erlebach2015,Feldman2016,Feldman2015} adopt this special case $(\Delta, F_{\ord})$. 

We further denote by $\mathcal{F}_{all}$ the class of any payment functions, including non-fair or even meaningless ones in the cost-sharing context. We introduce this class of functions to contrast it with $\mathcal{F}^*$. For a class $\mathcal{F}$ of cost-sharing functions, we sometimes write $(\Delta, \mathcal{F})$ instead of writing ``$(\Delta, F)$ for any $F\in \mathcal{F}$''.




In CSCSG, we consider two {types of} social costs for strategy profiles.
The \emph{sum-cost} of a strategy profile $\s$ is the total cost of all agents, that is, $\cost_{sc}(\textbf{s})=\sum_j p_j(\textbf{s})$.  
The \emph{max-cost} of a strategy profile $\s$ is the maximum {among the costs paid by all the agents}, that is, $\cost_{mc}(\textbf{s})=\max_j p_j(\textbf{s})$.


\subsection{The Existence of Nash Equilibrium}
Given a strategy profile \textbf{s},  let $s'_j$ be a new strategy of agent $j$ and $\textbf{s}_{-j}=\textbf{s}\setminus \{s_j\}$
be the strategy profile $\textbf{s}$ 
excluding $s_j$.
If there is an agent $j$ such that $p_j(\textbf{s})> p_j(s'_j,\ \textbf{s}_{-j})$ for some $s'_j$, agent $j$ has an incentive to change its strategy from $s_j$ to $s'_j$. We call this type of change a \emph{deviation}.  
A strategy profile \textbf{s} is called a \emph{Nash equilibrium} if any agent 
does not have an incentive to deviate from \textbf{s}, that is,
$p_j(\textbf{s})\le p_j(s'_j,\ \textbf{s}_{-j})$
holds for any agent $j$ and any $s'_j$.
We denote the set of  Nash equilibria in {\rm CSCSG} $(\Delta, F)$ by $\NE(\Delta, F)$.

The proof of Theorem 2.1 of \cite{Anshelevich2008} shows that any non-capacitated network design game always has a Nash equilibrium by an argument using a potential function. Because we only consider feasible games,  a similar argument can be applied to {\rm CSCSG} $(\Delta, F)$ using the following potential function:
$\Phi(\textbf{s})=\sum_{e\in E}\sum_{x=1}^{x_e(\s)}f_e(x)$, where $f_e \in F$. 
\begin{proposition}
\label{prop:potential}
For any {\rm CSCSG} $(\Delta, \mathcal{F}_{all})$, there exists a pure Nash equilibrium.
\end{proposition}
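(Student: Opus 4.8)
The plan is to exhibit $\Phi$ from the excerpt as an \emph{exact potential} for the game restricted to feasible strategy profiles, and then take a profile that minimizes $\Phi$. The key observation is that, although $\mathcal{F}_{all}$ permits arbitrary payment functions, each $f_e$ depends only on the edge $e$ and on the load $x_e(\s)$, not on which agent is paying; hence $(\Delta, \mathcal{F}_{all})$ is a symmetric network congestion game with an added capacity constraint, and Rosenthal's potential argument goes through once feasibility is accounted for.

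First I would establish the exact-potential identity: if $\s$ and $\s' = (s'_j, \s_{-j})$ are both feasible and differ only in agent $j$'s strategy, then $\Phi(\s') - \Phi(\s) = p_j(\s') - p_j(\s)$. Indeed, the only edges whose load changes from $\s$ to $\s'$ are those in $E(s_j)\setminus E(s'_j)$ (load decreases by one) and those in $E(s'_j)\setminus E(s_j)$ (load increases by one), while edges in $E(s_j)\cap E(s'_j)$ keep the same load. For an edge $e$ whose load drops from $x$ to $x-1$ the inner sum in $\Phi$ changes by $-f_e(x)$, which is exactly what agent $j$ paid on $e$ in $\s$; for an edge $e$ whose load rises from $x-1$ to $x$ it changes by $+f_e(x)$, exactly what agent $j$ pays on $e$ in $\s'$. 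Summing over edges and cancelling the contributions of the common edges (whose loads are unchanged) yields the identity. This telescoping bookkeeping is the only computation and is routine.

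Next, since the game is feasible the set of feasible strategy profiles is nonempty and finite, so some feasible $\s^\star$ minimizes $\Phi$ over this set; I claim $\s^\star \in \NE(\Delta, \mathcal{F}_{all})$. Fix an agent $j$ and an alternative strategy $s'_j$, and put $\s' = (s'_j, \s^\star_{-j})$. If $\s'$ is infeasible then $p_j(\s') = \infty > p_j(\s^\star)$, the latter being finite because $\s^\star$ is feasible, so $j$ cannot benefit. If $\s'$ is feasible, the identity gives $p_j(\s') - p_j(\s^\star) = \Phi(\s') - \Phi(\s^\star) \ge 0$ by minimality of $\s^\star$. Either way $p_j(\s^\star) \le p_j(\s')$, so no agent has an incentive to deviate.

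The one place to be careful — and the only real obstacle beyond the standard congestion-game template — is the interaction with capacities: $\Phi$ is being minimized over the restricted domain of feasible profiles rather than over all profiles, so one must rule out profitable deviations that leave this domain. That is exactly what the $p_j(\s') = \infty$ case handles, and it is where the standing assumption that the game is feasible gets used.
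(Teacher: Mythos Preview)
Your proof is correct and follows essentially the same approach as the paper: the paper states the potential function $\Phi(\s)=\sum_{e\in E}\sum_{x=1}^{x_e(\s)}f_e(x)$ and defers the argument to Theorem~2.1 of Anshelevich et al., whereas you spell out the Rosenthal-style exact-potential computation and the feasibility bookkeeping that the capacitated setting requires. Your observation that $\mathcal{F}_{all}$ still requires $f_e$ to depend only on the edge and its load (not on the identity of the agent) is exactly what makes the potential argument go through, and your handling of infeasible deviations via $p_j=\infty$ is the right way to reconcile capacities with the standard congestion-game template.
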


\subsection{Price of Anarchy and Price of Stability}
The Price of Anarchy (PoA) and the Price of Stability (PoS)  measure how inefficient the cost of a Nash equilibrium is compared to an optimal cost. 
Let \textbf{s}$^*_{sc}$ be an optimal strategy profile under sum-cost, and 
\textbf{s}$^*_{mc}$ be an optimal strategy profile under max-cost, respectively. 
Then the PoA's of $(\Delta,F)$ under sum-cost and max-cost are defined as follows.
\begin{align*}
\PoA_{sc}(\Delta,F)=\frac{\max_{\textbf{s}\in {\rm NE}(\Delta,F)}\cost_{sc}(\textbf{s})}{\cost_{sc}(\textbf{s}^*_{sc})}, \\
\PoA_{mc}(\Delta,F)=\frac{\max_{\textbf{s}\in {\rm NE}(\Delta,F)}\cost_{mc}(\textbf{s})}{\cost_{mc}(\textbf{s}^*_{mc})}
\end{align*}
Similarly, the PoS's of $(\Delta,F)$ under sum-cost and max-cost are defined as follows.
\begin{align*}
\PoS_{sc}(\Delta,F) = \frac{\min_{\textbf{s}\in {\rm NE}(\Delta,F)}\cost_{sc}(\textbf{s})}{\cost_{sc}(\textbf{s}^*_{sc})}, \\
\PoS_{mc}(\Delta,F) = \frac{\min_{\textbf{s}\in {\rm NE}(\Delta,F)}\cost_{mc}(\textbf{s})}{\cost_{mc}(\textbf{s}^*_{mc})}
\end{align*}
When it is clear from the context, we sometimes omit $(\Delta,F)$.



\subsection{Graph Classes}

A \emph{single source single sink directed acyclic graph} is a directed graph with exactly one source node $s$ and sink node $t$ and without cycles.
We simply call it a directed acyclic graph (DAG) in this paper.


A \emph{two-terminal series-parallel graph} $G$ is a directed graph with  exactly one source node $s$ and sink node $t$ that can be constructed by a sequence of the following three operations  \cite{epstein}: 
(i)  Create a single directed edge $(s,t)$. 
(ii) Given two two-terminal series-parallel graphs $G_X$ with terminals $s_X$ and $t_X$ and $G_Y$ with terminals $s_Y$ and $t_Y$, form a new graph $S(G_X,G_Y)$ with terminals $s$ and $t$ by identifying $s=s_X$, $t_X=s_Y$ and $t=t_Y$. We call this operation the \emph{series composition} of $X$ and $Y$. 
(iii) Given two two-terminal series-parallel graphs $G_X$ with terminals $s_X$ and $t_X$ and $G_Y$ with terminals $s_Y$ and $t_Y$, form a new graph $P(G_X,G_Y)$ with terminals $s$ and $t$ by identifying $s=s_X=s_Y$ and $t=t_X=t_Y$. We call this operation the \emph{parallel composition} of $G_X$ and $G_Y$.

Note that any two-terminal series-parallel graph is a directed acyclic graph.
We call a two-terminal series-parallel graph a series-parallel graph (SP graph) for simplicity \cite{epstein}.
An SP graph $G$ is a \emph{parallel-link graph} if it is produced by only parallel compositions of single edges. 

By the definitions of the above graphs, the following inclusion relation holds, where each name represents a graph class: $\mbox{Parallel-link graph}\subseteq \mbox{SP graph} \subseteq \mbox{DAG} \subseteq \mbox{General graph}$.

\section{Capacitated Cost-Sharing Connection Games under Sum-Cost Criterion}\label{sec:SumCost}
In this section, we give the bounds of PoA and PoS of {\rm CSCSG} under the sum-cost criterion.

\subsection{Price of Anarchy (PoA)}

In \cite{Feldman2015}, Feldman and Ron show that $\PoA_{sc}$ of CSCSG is unbounded on \emph{undirected} graphs when the cost-sharing functions is in $F_{\ord}$, that is, $f_e(x_e(\s))=p_e/x_e(\s)$ for each $e\in E$. Because any CSCSG on an undirected graph can be transformed into a CSCSG on a directed graph \cite{Erlebach2015}, $\PoA_{sc}$ is unbounded on directed graphs. However, the transformation yields directed cycles.

\subsubsection{Directed acyclic graphs}
In this subsection, we show that $\PoA_{sc}$ of $(\Delta, F_{\ord})$ is unbounded even on directed acyclic graphs (DAG). 
In the proof, we use a directed acyclic graph shown in Figure \ref{Fig:DAG} whose edge costs are represented by some variables. We can show that $\PoA_{sc}$ can be infinitely large by controlling the variables. 
\begin{theorem}\label{thm:DAG:sum}
There exists a CSCSG $(\Delta, F_{\ord})$  on DAGs such that $\PoA_{sc}$ is unbounded. 
\end{theorem}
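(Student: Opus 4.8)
The plan is to construct an explicit family of CSCSG instances on a fixed DAG (the one in Figure~\ref{Fig:DAG}) parameterized by edge costs, exhibit a Nash equilibrium whose sum-cost is large, compare it with a cheaper strategy profile (ideally the social optimum, or a good enough upper bound on it), and show the ratio grows without bound as the parameters are tuned. Since we only need $F_{\ord}$ (the ordinary fair sharing $p_e/x$), this is the easiest instance of the generalized scheme, so the construction does not need any of properties (1)--(3) beyond what $F_{\ord}$ already satisfies.

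**The key steps, in order.** First, I would fix the DAG: presumably a small gadget with $s$, $t$, and one or two intermediate vertices, with a ``cheap shared'' route and an ``expensive private'' route, arranged acyclically so that the transformation artifacts of \cite{Feldman2015,Erlebach2015} (which create cycles) are avoided. The capacities are the crucial ingredient: on the cheap route, at least one edge should have capacity strictly less than $n$, so that not all agents can use it simultaneously; this is what forces a bad equilibrium to exist on a DAG even though on an uncapacitated graph $\PoS_{sc}=1$ and $\PoA_{sc}=n$ only. Second, I would identify the strategy profile $\s^*$ realizing (an upper bound on) the social optimum --- typically all $n$ agents crammed onto the cheap route as far as capacities permit, with the few ``overflow'' agents forced onto a slightly more expensive detour; compute $\cost_{sc}(\s^*)$ and show it is small (bounded, or growing slowly). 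Third, I would exhibit a feasible profile $\s$ in which agents are split so that each expensive edge is used by exactly one agent (hence that agent pays its full cost $p_e$), verify it is a Nash equilibrium --- the key checks being that (a) no agent on an expensive edge can move to the cheap route because the capacity is already saturated by the other agents, and (b) no agent wants to swap onto another expensive edge since by symmetry the cost is the same --- and compute $\cost_{sc}(\s)$, which will be essentially $\sum$ of the expensive costs. Finally, setting the expensive cost $\to\infty$ (or the cheap cost $\to 0$) makes $\cost_{sc}(\s)/\cost_{sc}(\s^*)\to\infty$.

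**The main obstacle** is the Nash-equilibrium verification: I must design the gadget so that the only ``escape'' from the expensive edges --- namely, joining the cheap shared route --- is blocked by the capacity constraint being tight in the bad profile, while simultaneously ensuring that in the good profile $\s^*$ the capacity constraint is loose enough that almost all agents can fit cheaply. Balancing these two requirements (tight in $\s$, slack in $\s^*$) on a single fixed DAG with a single capacity vector is the delicate part; it typically requires the cheap route to itself branch into several parallel capacity-$1$ (or small-capacity) pieces, so that ``who gets the cheap slot'' differs between the two profiles. A secondary point to be careful about is that deviating agents might reroute through a combination of edges not originally anticipated, so I would keep the graph small enough (a handful of edges) that all deviations can be enumerated directly. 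Once the gadget is right, the computations are routine: one divides out the parameter and takes a limit.
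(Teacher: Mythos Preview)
Your high-level plan --- fix the DAG of Figure~\ref{Fig:DAG}, exhibit a bad Nash equilibrium and a cheap profile, and send the cost ratio to infinity --- is exactly what the paper does, and with only two agents. But your diagnosis of the mechanism is off, and if you tried to build a gadget from that intuition alone you would not get an unbounded $\PoA_{sc}$.

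You write that the key is a cheap route with capacity strictly less than $n$, so that ``not all agents can use it simultaneously,'' and that in the bad equilibrium the cheap route is ``saturated by the other agents.'' That picture gives at best $\PoA_{sc}=1$: if the cheap route has bottleneck capacity $k<n$, then in \emph{every} Nash equilibrium it is saturated (otherwise an expensive agent moves onto it), so the equilibrium coincides with the optimum. The parallel-link intuition you are invoking is precisely why $\PoA_{sc}\le n$ on SP graphs (Theorem~\ref{thm:scpoa}); it cannot produce unboundedness.

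The actual mechanism in Figure~\ref{Fig:DAG} is \emph{mutual blocking via crossing paths}. There are two disjoint cheap $s$--$t$ paths ($s\to a\to c\to t$ and $s\to b\to t$, all edges cost $x$, capacity $1$), so the optimum routes both agents cheaply with sum-cost $5x$. The expensive edges $a\to b$ and $b\to c$ (cost $y$) create two ``crossing'' paths $s\to a\to b\to t$ and $s\to b\to c\to t$. In the bad equilibrium each agent sits on one crossing path; each thereby occupies capacity-$1$ cheap edges ($s\to a$, $b\to t$ for one; $s\to b$, $c\to t$ for the other) that the \emph{other} agent would need to deviate to either cheap path. Neither agent is on a cheap route, yet each blocks all of the other's cheap deviations. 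The sum-cost is $4x+2y$, and letting $y/x\to\infty$ gives the result.

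So the missing idea is not ``insufficient cheap capacity'' but an interlocking, Braess-like structure in which the bad equilibrium self-saturates the very edges needed to escape it. Once you see that, the verification is the routine case-check you anticipated.
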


\begin{proof}
We give an example with two agents such that $\PoA_{sc}$ is unbounded where $\alpha<\beta$ (see Figure \ref{Fig:DAG}). 
Suppose one agent uses path $s\rightarrow a \rightarrow c \rightarrow t$, and the other uses path $s\rightarrow b \rightarrow t$. Then the sum-cost of this strategy profile is $5\alpha$. 

On the other hand, consider a strategy profile such that one agent uses path $s\rightarrow a \rightarrow b \rightarrow t$ and the other agent uses path $s\rightarrow b\rightarrow c\rightarrow t$. 
It is easy to see that this strategy profile is a Nash equilibrium and 
its sum cost is $4\alpha+2\beta$. 
Thus, the $\PoA_{sc}$ is at least $(4\alpha+2\beta)/5\alpha=4/5+2\beta/5\alpha$.
{By taking $\beta=\alpha^2$ and $\alpha$ arbitrary large}, the $\PoA_{sc}$ can be unbounded.
\end{proof}

\begin{figure}[tbp]
  \begin{center}
  \includegraphics[width=0.7\linewidth]{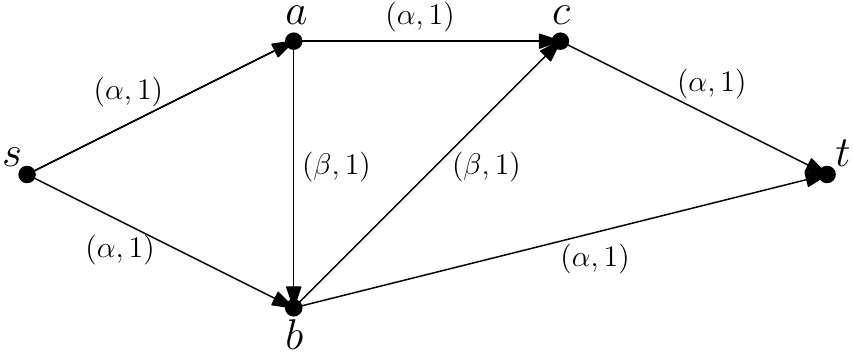}
  \end{center}
  \caption{A CSCSG $(\Delta, \{p_e/x_e\mid e\in E\})$ on DAGs such that $\PoA_{sc}$ is unbounded. The cost and capacity of an edge $e$ is denoted by $(p_e, c_e)$.}
  \label{Fig:DAG}
\end{figure}

\subsubsection{SP graphs}
For SP graphs, we show that $\PoA_{sc}$ is at most $n$, and it is tight. 
In \cite{Feldman2015}, Feldman and Ron show that the upper bound of PoA on (undirected) SP graphs is $n$ for the ordinary cost-sharing functions. We claim that it actually holds for \emph{directed} SP graphs and any cost-sharing functions in $\mathcal{F}^*$. We first introduce Lemma \ref{lem:feasible} shown in Theorem 6 in \cite{Feldman2015}, which holds 
for \emph{any} payment functions. Although the lemma is stated in the game-theoretic context, the claim is essentially about the network flow. Note that the original proof is for undirected SP graphs, but it can be easily modified to directed cases, though we omit the detail.

\begin{lemma}[\rm Lemma 6 in \cite{Feldman2015}]\label{lem:feasible}
Let $(\Delta, \mathcal{F}_{all})$ be a CSCSG on SP graphs. 
For $r, k \in \mathbb{N}$ where $r<k$, let \textbf{s} be a feasible strategy profile of $k$ agents, and $\textbf{s}'$ be a feasible strategy profile of $r$ agents. 
Then, there is an $s$-$t$ path $s_{r+1}$ in $G$ that uses only edges used in $\textbf{s}$ such that the strategy profile $(\textbf{s}',s_{r+1})$ of $r+1$ agents is feasible.
\end{lemma}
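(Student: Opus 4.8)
\textbf{Proof plan for Lemma~\ref{lem:feasible}.}

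The plan is to prove this by structural induction on the construction of the series-parallel graph $G$, following the three composition rules in the definition of an SP graph. The statement to maintain through the induction is exactly as phrased: given a feasible profile $\s$ of $k$ agents and a feasible profile $\s'$ of $r<k$ agents, we can route one more agent (the $(r{+}1)$-st) along a path contained in the edge set used by $\s$ so that the resulting profile $(\s', s_{r+1})$ of $r{+}1$ agents remains feasible, i.e.\ respects all capacities. A convenient reformulation is in flow language: think of $\s$ as supplying an integral $s$–$t$ flow of value $k$ (each agent's path carries one unit) and $\s'$ as an integral $s$–$t$ flow of value $r$; on every edge $e$ the capacity constraint is $x_e(\s)\le c_e$ and $x_e(\s')\le c_e$. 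We want to augment the $r$-flow by one unit using only edges where $\s$ sends flow, while keeping the augmented flow below the capacities $c_e$.

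First I would handle the base case: $G$ is a single edge $(s,t)$. Then $\s$ uses that edge with multiplicity $k$ and $\s'$ with multiplicity $r<k\le c_e$, so $r+1\le c_e$, and routing the new agent on that edge is feasible. For the inductive step with a series composition $G=S(G_X,G_Y)$: every $s$–$t$ path decomposes uniquely into an $s_X$–$t_X$ part in $G_X$ and an $s_Y$–$t_Y$ part in $G_Y$, so $\s$ and $\s'$ restrict to feasible profiles on $G_X$ and on $G_Y$ with the same agent counts $k$ and $r$; apply the induction hypothesis to $G_X$ to get a partial path $P_X$ inside the edges used by $\s|_{G_X}$ making $r+1$ agents feasible there, likewise get $P_Y$ in $G_Y$, and concatenate. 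The key point is that the two halves share no edges, so feasibility is checked componentwise and the concatenation $P_X P_Y$ is the desired path.

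The parallel composition $G=P(G_X,G_Y)$ is the interesting case and the main obstacle. Here each agent's $s$–$t$ path lies entirely in $G_X$ or entirely in $G_Y$, so $\s$ splits its $k$ agents as $k_X+k_Y=k$ between the two sides, and $\s'$ splits its $r$ agents as $r_X+r_Y=r$. We want to send the new agent into, say, $G_X$, which requires $r_X<k_X$ so that the induction hypothesis applies on $G_X$ (the profiles $\s|_{G_X}$ of $k_X$ agents and $\s'|_{G_X}$ of $r_X$ agents satisfy $r_X<k_X$). Such a side must exist: if we had $r_X\ge k_X$ and $r_Y\ge k_Y$ simultaneously, then $r=r_X+r_Y\ge k_X+k_Y=k$, contradicting $r<k$. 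So pick a side where $r_i<k_i$, apply induction there to obtain a path inside the edges used by $\s$ on that side extending the $r_i$-agent profile to a feasible $(r_i{+}1)$-agent profile, and leave the other side untouched; the overall profile on $G$ is then feasible because feasibility on $G_X$ and on $G_Y$ is independent (disjoint edge sets except for the shared terminals, which carry no edges). I should be a little careful that the induction hypothesis is applied with the correct, possibly smaller, agent counts on the subgraph, but the strict inequality $r_i<k_i$ is exactly what a counting argument on the partition delivers, so this goes through. Finally I would remark that the only place directedness matters is that path decompositions under series/parallel composition respect edge orientations, which they do by construction, so the undirected proof of Feldman and Ron transfers verbatim.
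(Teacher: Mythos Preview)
Your structural induction on the SP decomposition is correct and is the standard argument; the paper itself does not prove this lemma but cites it from Feldman and Ron~\cite{Feldman2015}, remarking only that the claim is essentially a network-flow statement and that the undirected proof carries over to the directed setting. Your write-up fills in exactly those omitted details, and your final remark about directedness matches the paper's note. One tiny point worth making explicit in the parallel case: if the side you pick has $r_i=0$ (so the lemma's hypothesis ``$r\in\mathbb{N}$'' may formally fail there depending on convention), you can simply take any path from $\s|_{G_i}$ directly, since $k_i\ge 1$ guarantees such a path exists and is feasible on its own.
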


\if0
\begin{proof}
An SP graph $G$ is constructed by a sequence of series compositions and  parallel compositions.
Because $G$ is recursively constructed by these compositions from single edges, we can regard the sequence as a binary tree where every leaf corresponds to a single edge and every inner node corresponds to either a series composition or a parallel composition.
We call the binary tree  the \emph{constructed tree} of $G$.

By induction on the construction tree of $G$, we show the lemma.
As the base case, if $G$ is a single edge, the lemma clearly holds.

Suppose that for any SP graph $G$ whose construction tree is of height at most $h$, the lemma holds.
%
Consider an SP graph whose construction tree is of height $h+1$ that can be obtained by either a series composition or a parallel composition of $G_1$ and $G_2$.
Note that the construction trees of $G_1$ and $G_2$ are of height at most $h$, respectively.
We denote by $s_i$ and $t_i$ two terminal nodes of $G_i$  for $i\in \{1, 2\}$.
Let $\s_i$ and $\s'_i$ be the strategy profiles induced by $\s$ and $\s'$, respectively, played on SP graph $G_i$ for $i\in \{1, 2\}$.

To conclude the proof, we divide into two cases:  series composition and parallel composition.

\noindent \textbf{Case: Series composition} 

In a series composition, $G$ is obtained from $G_1$ and $G_2$ by identifying $s=s_1$, $t_1=s_2$,  and $t=t_2$.
By the induction hypothesis, for $i\in \{1, 2\}$, there is an $s_i$-$t_i$ path $s_{r+1}$ in $G_i$ that uses only edges used in $\s_i$ such that the strategy profile $(\s_i,s_{r+1})$ of $r+1$ agents is feasible.
Because a path ($s=s_1$)--($t_1=s_2$)--($t_2=t$) is an $s$-$t$ path in $G$, the lemma holds.

\noindent \textbf{Case: Parallel composition} 

In a parallel composition, $G$ is obtained from $G_1$ and $G_2$ by identifying $s=s_1=s_2$ and $t=t_1=t_2$.
By the construction, there is no agent under $\s$ in $G$ that uses both edges of $G_1$ and edges of $G_2$.
Suppose that $k'$ agents use only edges of $G_1$ and $k-k'$ agents use only edges of $G_2$ in the strategy profile $\s$, and
$r'$ agents use  only edges of $G_1$ and $r-r'$ agents  only edges of $G_2$. 
Because $r<k$, it holds that either $r'<k'$ or $r-r'<k-k'$. Without loss of generality, we assume that $r'<k'$.
By the induction hypothesis, there is an $s_1$-$t_1$ path $s_{r'+1}$ in $G_1$ that uses only edges used in $\s_1$ such that the strategy profile $(\s_1,s_{r'+1})$ of $r'+1$ agents is feasible.
Since an $s_1$-$t_1$ path is an $s$-$t$ path in $G$, the lemma holds.
\end{proof}
\fi





We obtain the following lemma by using Lemma \ref{lem:feasible}. Note that Lemma \ref{lem:feasible} holds for any payment functions, but Lemma \ref{lem:cost} holds only for cost-sharing functions in the generalized cost-sharing scheme of this paper.  

\begin{lemma}
\label{lem:cost}
Let $(\Delta, \mathcal{F}^*)$ be a CSCSG on SP graphs, and let  $\textbf{s}^*$ be an optimal strategy profile under sum-cost and \textbf{s} be a strategy profile that is a Nash equilibrium in $(\Delta, \mathcal{F}^*)$.
Then, the cost of each agent in $\s$ is at most $\cost_{sc}(\s^*)$.
\end{lemma}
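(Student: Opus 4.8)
The goal is to show that in a Nash equilibrium $\s$ of a CSCSG on an SP graph, no single agent pays more than $\cost_{sc}(\s^*)$, where $\s^*$ is a sum-cost optimum. The natural strategy is to bound the cost of an arbitrary agent $j$ in $\s$ by exhibiting a \emph{deviation} for $j$ whose cost is at most $\cost_{sc}(\s^*)$; since $\s$ is a Nash equilibrium, $j$'s actual cost cannot exceed the cost of that deviation. First I would fix agent $j$ and consider the strategy profile $\s_{-j}$ of the remaining $n-1$ agents, which is feasible. I want to produce a single $s$-$t$ path $s_j'$ that (a) keeps $(\s_{-j}, s_j')$ feasible, (b) uses only edges appearing in $\s^*$, and (c) has total payment at most $\cost_{sc}(\s^*)$. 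To get (a) and (b) I would invoke Lemma~\ref{lem:feasible} with $\s^*$ playing the role of ``$\s$'' (a feasible profile of $k=n$ agents) and $\s_{-j}$ playing the role of ``$\textbf{s}'$'' (a feasible profile of $r=n-1$ agents): the lemma hands me a path $s_j'$ using only edges of $\s^*$ such that $(\s_{-j}, s_j')$ is feasible.

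\textbf{Bounding the deviation cost.} The crux is (c). Consider the profile $\s' = (\s_{-j}, s_j')$. For each edge $e \in E(s_j')$, agent $j$'s share under $\s'$ is $f_e(x_e(\s'))$. Since $e$ is used by $\s^*$, property (2') (the total payment on $e$ is at least $p_e$ under \emph{any} feasible multiplicity) together with property (3') ($f_e(x) \le p_e$) gives $f_e(x_e(\s')) \le p_e$. Now observe that $\sum_{e \in E(s_j')} p_e \le \sum_{e \in E(\s^*)} p_e$, because $E(s_j') \subseteq E(\s^*)$ and costs are nonnegative. Finally, $\sum_{e \in E(\s^*)} p_e \le \cost_{sc}(\s^*)$: indeed $\cost_{sc}(\s^*) = \sum_e x_e(\s^*) f_e(x_e(\s^*)) \ge \sum_{e \in E(\s^*)} p_e$ by property (2') applied edgewise. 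Chaining these, $p_j(\s') \le \sum_{e\in E(s_j')} f_e(x_e(\s')) \le \sum_{e \in E(s_j')} p_e \le \cost_{sc}(\s^*)$. Since $\s$ is a Nash equilibrium and $s_j'$ is an available strategy for $j$ (the deviation keeps feasibility), $p_j(\s) \le p_j(\s_{-j}, s_j') = p_j(\s') \le \cost_{sc}(\s^*)$, which is exactly the claim.

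\textbf{Main obstacle.} The genuinely nontrivial ingredient is (a)+(b): guaranteeing that a single deviating path can be routed entirely within the support of $\s^*$ while respecting capacities when combined with the other $n-1$ agents' strategies. This is precisely the SP-graph flow property encapsulated in Lemma~\ref{lem:feasible}, and it is where the restriction to series-parallel graphs is used; on general DAGs this routing need not exist (consistent with the unbounded PoA of Theorem~\ref{thm:DAG:sum}). The rest — the chain of inequalities $f_e(x_e(\s')) \le p_e$ and $\sum_{e\in E(\s^*)} p_e \le \cost_{sc}(\s^*)$ — is a routine application of properties (2'), (3') of the generalized cost-sharing scheme $\mathcal{F}^*$, and it is exactly here (not in Lemma~\ref{lem:feasible}) that we need $F \in \mathcal{F}^*$ rather than an arbitrary payment function: we must know each edge's total payment is sandwiched between $p_e$ and $n p_e$ so that comparing against $\sum_{e\in E(\s^*)}p_e$ is meaningful.
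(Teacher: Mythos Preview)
Your proof is correct and follows essentially the same route as the paper: you invoke Lemma~\ref{lem:feasible} (with $\s^*$ as the $n$-agent profile and $\s_{-j}$ as the $(n-1)$-agent profile) to obtain a feasible deviation path inside $E(\s^*)$, then bound its cost by $\sum_{e\in E(\s^*)} p_e \le \cost_{sc}(\s^*)$ via properties~(3') and~(2'). The paper phrases it as a proof by contradiction and splits the deviation-cost sum into edges shared with $s_j$ versus new edges, but the substance is identical.
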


\begin{proof}
Let $\textbf{s}^*$ be an optimal strategy profile under the sum-cost criterion, and \textbf{s} be a Nash equilibrium strategy profile.
Then we show that the cost of each agent in $\s$ is at most $\cost_{sc}(\s^*)$.
For the sake of contradiction, we assume that there is an agent $i$ whose cost $p_i(\s)$ is higher than $\cost_{sc}(\textbf{s}^*)$. 
Let \textbf{s}$_{-i}$ be the strategy profile for all agents except for agent $i$ and $s_i$ be the $s$-$t$ path chosen by agent $i$ in $\s$. 
Given the strategy profile \textbf{s}$_{-i}$, there is a feasible $s$-$t$ path $s'$ that uses only edges in $\textbf{s}^*$ by Lemma \ref{lem:feasible}. 
If agent $i$ chooses the $s$-$t$ path $s'$ instead of $s_i$, we claim that the cost of agent $i$ becomes at most $\cost_{sc}(\s^*)$. This can be shown as follows. 
In the original strategy $\s$, agents using edge $e$ pay $f_e(x_e(\s))$ for each, and  
in the new strategy $(\s_{-i}, s')$, agent $i$ needs to pay $f_e(x_e(\s) + 1)$ for $e\in E(s')\setminus E(s_i)$. By taking the summation, the total cost that agent $i$ pays in (\textbf{s}$_{-i}, s')$ is 
\begin{align*}
\sum_{e\in E(s')\cap E(s_i)}f_e(x_e) + \sum_{e\in E(s')\setminus E(s_i)}f_e(x_e+1) 
&\le  \sum_{e\in E(s')\cap E(s_i)}p_e  + \sum_{e\in E(s')\setminus E(s_i)} p_e \\
&= \sum_{e\in E(s')}p_e \\
&\le  \sum_{e\in E(\mathbf{s}^*)} p_e \\
&\le \cost_{sc}(\s^*).
\end{align*}
The first and last inequalities come from properties (3') and (2') of our generalized cost-sharing scheme, respectively. 
Thus, agent $i$ can pay less by deviating to this path.  
This contradicts  that the strategy profile \textbf{s} is a Nash equilibrium. 
Thus, $p_j(\textbf{s})\le \cost_{sc}(\s^*)$ for any agent $j$.
\end{proof}

By Lemma \ref{lem:cost}, we can see that the total cost of the agents in a Nash equilibrium is at most $n\cdot \cost_{sc}(\s^*)$, which implies the following.  

\begin{lemma}\label{lem:ubscpoa}
In CSCSG $(\Delta, \mathcal{F}^*)$ on SP graphs, 
$\PoA_{sc}$ is at most $n$.
\end{lemma}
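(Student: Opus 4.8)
The plan is to obtain the bound as an immediate consequence of Lemma~\ref{lem:cost} by summing over all $n$ agents. First I would fix an arbitrary $F\in\mathcal{F}^*$ and an SP-graph instance $\Delta$, let $\s^*$ be an optimal strategy profile under sum-cost (so $\cost_{sc}(\s^*)=\cost_{sc}(\s^*_{sc})$), and let $\s$ be an arbitrary Nash equilibrium of $(\Delta,F)$; such an $\s$ exists by Proposition~\ref{prop:potential}, and since the game is feasible $\cost_{sc}(\s^*)$ is finite. Lemma~\ref{lem:cost} then gives $p_j(\s)\le\cost_{sc}(\s^*)$ for every agent $j$, which in particular shows that each $p_j(\s)$ is finite and hence that $\s$ is feasible.

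The second step is just the averaging argument: $\cost_{sc}(\s)=\sum_{j=1}^n p_j(\s)\le n\cdot\cost_{sc}(\s^*)=n\cdot\cost_{sc}(\s^*_{sc})$. Dividing through by $\cost_{sc}(\s^*_{sc})$ yields $\cost_{sc}(\s)/\cost_{sc}(\s^*_{sc})\le n$ (in the degenerate case $\cost_{sc}(\s^*_{sc})=0$, Lemma~\ref{lem:cost} forces $p_j(\s)=0$ for every $j$ as well, so the ratio is conventionally $1\le n$). Since $\s$ ranges over all of $\NE(\Delta,F)$, taking the maximum gives $\PoA_{sc}(\Delta,F)\le n$, and as $F\in\mathcal{F}^*$ was arbitrary this holds throughout the generalized cost-sharing scheme.

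I do not anticipate a genuine obstacle here: the substantive content is already packaged into Lemma~\ref{lem:cost} (which itself rests on the SP-graph rerouting property of Lemma~\ref{lem:feasible} together with properties (2') and (3') of $\mathcal{F}^*$), and what remains is a one-line summation. The only point worth spelling out is that the per-agent bound of Lemma~\ref{lem:cost} applies to \emph{every} Nash equilibrium, including the worst one, which is exactly what the definition of $\PoA_{sc}$ requires; the analogous but more delicate question for the cheapest equilibrium, i.e.\ $\PoS_{sc}$, is treated separately.
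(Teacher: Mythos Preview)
Your proposal is correct and matches the paper's own argument essentially verbatim: the paper simply observes that Lemma~\ref{lem:cost} bounds each agent's cost in any Nash equilibrium by $\cost_{sc}(\s^*)$, and then sums over the $n$ agents. The extra remarks you add about feasibility and the degenerate zero-cost case are reasonable clarifications but not needed for the paper's level of detail.
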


As for the lower bound of $\PoA_{sc}$, Anshelevich et al. give an example of \emph{uncapacitated} cost-sharing connection games on parallel-link graphs where $\PoA_{sc}$ is $n$  \cite{Anshelevich2008}. The example is a game of $n$ agents on a parallel-link graph consisting of two vertices and two directed edges whose costs  are defined by $1$ and $n$, respectively. 
Since a CSCSG such that the capacity of each edge is $n$ is equivalent to an uncapacitated cost-sharing connection game, we obtain the same lower bound for CSCSG.


\begin{lemma}
\label{lem:lbscpoa}
There exists a CSCSG $(\Delta, F_{\ord})$ where $\PoA_{sc}$ is $n$ even on parallel-link graphs.
\end{lemma}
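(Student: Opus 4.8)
The plan is to exhibit a concrete parallel-link instance, essentially the classical example of Anshelevich et al., and verify directly that it is a feasible CSCSG in our sense and that the worst Nash equilibrium costs $n$ times the optimum. First I would set up the instance: two vertices $s$ and $t$, and two parallel edges $e_1$ and $e_2$ from $s$ to $t$ with costs $p_{e_1}=1$ and $p_{e_2}=n$, and both capacities set to $n$ (so the game is trivially feasible, since all $n$ agents can route through either edge). Since the cost-sharing function is $F_{\ord}$, i.e.\ $f_e(x)=p_e/x$, this is exactly the uncapacitated game, so all the standard reasoning applies verbatim.

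Next I would identify the two relevant strategy profiles. In the optimal profile $\textbf{s}^*_{sc}$ all $n$ agents route through $e_1$, giving $\cost_{sc}(\textbf{s}^*_{sc}) = n\cdot(1/n) = 1$. For the bad profile, let all $n$ agents route through $e_2$; then each pays $n/n = 1$, so the sum-cost is $n$. The key step is checking that this profile is a Nash equilibrium: an agent deviating to $e_1$ would then be the sole user of $e_1$ and pay $p_{e_1}=1$, which is not strictly less than its current payment of $1$, so no agent has an incentive to deviate. Hence this profile lies in $\NE(\Delta,F_{\ord})$, and $\PoA_{sc} \ge n/1 = n$. Combining with Lemma~\ref{lem:ubscpoa} (parallel-link graphs are SP graphs, and $F_{\ord}\in\mathcal{F}^*$) gives $\PoA_{sc} = n$ on this instance, though for the lemma only the lower bound is needed.

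I do not anticipate a genuine obstacle here — the statement is essentially a transfer of a known uncapacitated lower bound to the capacitated setting, and the only thing to be careful about is to make the equilibrium verification airtight: the definition of deviation in the model uses a strict inequality $p_i(\textbf{s}) > p_i(s'_i,\textbf{s}_{-i})$, so a tie (payment $1$ either way) does not constitute an improving deviation, and the all-on-$e_2$ profile is indeed stable. One could also note that setting $c_e = n$ makes the capacity constraints vacuous, so feasibility of the game and of both profiles is immediate. The remaining routine check is simply that no other deviation is possible, but since there are only two edges and an agent's only alternative to $e_2$ is $e_1$, this is exhausted by the computation above.
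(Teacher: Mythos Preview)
Your proposal is correct and follows exactly the paper's approach: the paper also invokes the two-edge parallel-link example of Anshelevich et al.\ with edge costs $1$ and $n$, and observes that setting each capacity to $n$ makes the capacitated game coincide with the uncapacitated one. Your explicit verification that the all-on-$e_2$ profile is a Nash equilibrium (via the tie at payment $1$) just spells out what the paper leaves implicit.
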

By Lemmas \ref{lem:ubscpoa} and \ref{lem:lbscpoa}, we obtain  Theorem \ref{thm:scpoa}.

\begin{theorem}\label{thm:scpoa}
For any CSCSG $(\Delta, \mathcal{F}^*)$ on SP graphs, $\PoA_{sc}$ is at most $n$. Furthermore, there exists a CSCSG with $\PoA_{sc}=n$ on parallel-link graphs.
\end{theorem}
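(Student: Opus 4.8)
The plan is to assemble Theorem~\ref{thm:scpoa} directly from the two lemmas that precede it, since each half of the statement corresponds to exactly one of them. The upper bound ``$\PoA_{sc}\le n$ on SP graphs'' is precisely Lemma~\ref{lem:ubscpoa}, and the tightness claim ``there is a CSCSG with $\PoA_{sc}=n$ on parallel-link graphs'' is precisely Lemma~\ref{lem:lbscpoa}. So the proof is essentially a one-line citation of both; the real mathematical content lives in Lemmas~\ref{lem:cost}, \ref{lem:ubscpoa}, and \ref{lem:lbscpoa}.

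That said, I would briefly recap \emph{why} the upper bound holds, to make the theorem self-contained. By Lemma~\ref{lem:cost}, in any Nash equilibrium $\s$ of $(\Delta,\mathcal{F}^*)$ on an SP graph, each single agent pays at most $\cost_{sc}(\s^*)$, where $\s^*$ is the sum-cost optimum. Summing over the $n$ agents gives $\cost_{sc}(\s)=\sum_j p_j(\s)\le n\cdot\cost_{sc}(\s^*)$, and since this holds for every Nash equilibrium (in particular the worst one) and for every such game, $\PoA_{sc}=\max_{\s\in\NE}\cost_{sc}(\s)/\cost_{sc}(\s^*)\le n$. For the lower bound I would recall the witnessing instance: a parallel-link graph on two vertices $s,t$ with two parallel edges of costs $1$ and $n$ and capacity $n$ each, $n$ agents, cost-sharing function $F_{\ord}$. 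Routing all $n$ agents on the cost-$n$ edge is a Nash equilibrium (each pays $n/n=1$; deviating alone to the cost-$1$ edge also costs $1$, so no profitable deviation), with sum-cost $n$, while the optimum routes everyone on the cost-$1$ edge for sum-cost $1$; hence $\PoA_{sc}=n$ in this game. Since $F_{\ord}\in\mathcal{F}^*$, this is a legitimate instance of the generalized model.

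There is no real obstacle here: the theorem is a packaging of earlier results. The only thing to be slightly careful about is the quantifier structure — the upper bound is ``for all games,'' the lower bound is ``there exists a game'' — and making sure the lower-bound instance genuinely satisfies feasibility (it does: every edge has capacity $n\ge x_e(\s)$) and uses a cost-sharing function in $\mathcal{F}^*$ (it does: $F_{\ord}$). I would also note, as the surrounding text already does, that parallel-link graphs are a subclass of SP graphs, so the lower-bound instance indeed certifies tightness of the SP-graph upper bound.

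\begin{proof}
The upper bound is Lemma~\ref{lem:ubscpoa}: for any CSCSG $(\Delta,\mathcal{F}^*)$ on an SP graph and any Nash equilibrium $\s$, Lemma~\ref{lem:cost} gives $p_j(\s)\le\cost_{sc}(\s^*)$ for every agent $j$, so $\cost_{sc}(\s)=\sum_j p_j(\s)\le n\cdot\cost_{sc}(\s^*)$; taking the worst such $\s$ yields $\PoA_{sc}\le n$. For tightness, by Lemma~\ref{lem:lbscpoa} there is a CSCSG $(\Delta,F_{\ord})$ on a parallel-link graph with $\PoA_{sc}=n$, namely the two-vertex graph with two parallel edges of costs $1$ and $n$, each of capacity $n$, and $n$ agents: routing all agents on the cost-$n$ edge is a Nash equilibrium of sum-cost $n$, while the optimum has sum-cost $1$. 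Since $F_{\ord}\in\mathcal{F}^*$ and every parallel-link graph is an SP graph, this instance also witnesses tightness within the class $(\Delta,\mathcal{F}^*)$ on SP graphs.
\end{proof}
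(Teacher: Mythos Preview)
Your proposal is correct and follows exactly the paper's approach: the theorem is obtained directly by combining Lemma~\ref{lem:ubscpoa} for the upper bound and Lemma~\ref{lem:lbscpoa} for the tightness example, and you have merely expanded slightly on why each lemma gives what is claimed. The paper itself states only ``By Lemmas~\ref{lem:ubscpoa} and~\ref{lem:lbscpoa}, we obtain Theorem~\ref{thm:scpoa}'', so your write-up is a faithful (and somewhat more detailed) rendering of the same argument.
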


\subsection{Price of Stability (PoS)}\label{sec:PoS_sum}

In this subsection, we show that for any CSCSG $(\Delta, \mathcal{F}^*)$, $\PoS_{sc}$  is at most $n$, and it is almost tight.
{We first show the following property obtained from 
the non-increasingness of $f_e\in \mathcal{F}^*$.}

\begin{lemma}\label{lem:non-increasing}
$\cost_{sc}(\textbf{s}) \le \Phi(\textbf{s})$ holds.
\end{lemma}

\begin{proof}

Recall that $\Phi(\textbf{s})=\sum_{e\in E}\sum_{x=1}^{x_e(\s)}f_e(x)$.
Considering that $f_e\in \mathcal{F}^*$ is non-increasing, we obtain:
\begin{align*}
\Phi(\textbf{s})&=\sum_{e\in E}\sum_{x=1}^{x_e(\s)}f_e(x)\\
&\ge \sum_{e\in E}x_e(\s)f_e(x_e(\s))\\
&= \sum_{e\in E(\s)}x_e(\s)f_e(x_e(\s))\\
&=\sum_{j=1}^n\sum_{e\in E(s_j)}f_e(x_e(\s))\\
&=\cost_{sc}(\textbf{s}).
\end{align*}
\end{proof}

{Then we show the upper bound of $\PoS_{sc}$.}

\begin{lemma}\label{lem:ubscpos}
For any CSCSG $(\Delta, \mathcal{F}^*)$, $\PoS_{sc}$ is at most $n$.  
\end{lemma}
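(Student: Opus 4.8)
The plan is to run the standard potential-function argument for Price of Stability, using the exact potential $\Phi(\s)=\sum_{e\in E}\sum_{x=1}^{x_e(\s)}f_e(x)$ introduced before Proposition~\ref{prop:potential}: I take the feasible strategy profile minimizing $\Phi$, argue it is a Nash equilibrium, and then sandwich its sum-cost between $\cost_{sc}$ and $n$ times the sum-optimum.

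\emph{Step 1: the potential minimizer is a Nash equilibrium.} I would first observe that $(\Delta,F)$ with $F\in\mathcal{F}^*$ is an exact potential game for $\Phi$: when a single agent $j$ deviates from $s_j$ to a feasible $s'_j$, both $p_j$ and $\Phi$ change by exactly $\sum_{e\in E(s'_j)\setminus E(s_j)}f_e(x_e(\s)+1)-\sum_{e\in E(s_j)\setminus E(s'_j)}f_e(x_e(\s))$, while a deviation to an infeasible profile leaves $\Phi$ finite but gives $p_j=\infty$, so it is never improving. Since the game is feasible, the set of feasible profiles is nonempty and finite, so a feasible profile $\hat{\s}$ minimizing $\Phi$ over all feasible profiles exists; no agent can profitably deviate from $\hat{\s}$ (such a deviation would be to a feasible profile and would strictly decrease $\Phi$), hence $\hat{\s}\in\NE(\Delta,F)$.

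\emph{Step 2: the two inequalities relating $\Phi$ and $\cost_{sc}$.} For every feasible $\s$, non-increasingness of each $f_e$ gives $\sum_{x=1}^{x_e(\s)}f_e(x)\ge x_e(\s)f_e(x_e(\s))$, and summing over edges yields $\Phi(\s)\ge\sum_{e\in E(\s)}x_e(\s)f_e(x_e(\s))=\cost_{sc}(\s)$. For the sum-optimal profile $\s^*_{sc}$, property (3') ($f_e(x)\le p_e$) together with $x_e(\s^*_{sc})\le n$ gives $\sum_{x=1}^{x_e(\s^*_{sc})}f_e(x)\le x_e(\s^*_{sc})\,p_e\le n\,p_e$, so $\Phi(\s^*_{sc})\le n\sum_{e\in E(\s^*_{sc})}p_e$; and property (2') ($x_e f_e(x_e)\ge p_e$) gives $\cost_{sc}(\s^*_{sc})=\sum_{e\in E(\s^*_{sc})}x_e(\s^*_{sc})f_e(x_e(\s^*_{sc}))\ge\sum_{e\in E(\s^*_{sc})}p_e$. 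Combining these, $\Phi(\s^*_{sc})\le n\,\cost_{sc}(\s^*_{sc})$.

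\emph{Step 3: chaining.} Putting the pieces together, $\min_{\s\in\NE(\Delta,F)}\cost_{sc}(\s)\le\cost_{sc}(\hat{\s})\le\Phi(\hat{\s})\le\Phi(\s^*_{sc})\le n\,\cost_{sc}(\s^*_{sc})$, which is exactly $\PoS_{sc}\le n$. The only delicate point is Step~1 --- verifying that minimizing the potential really yields a Nash equilibrium once capacities are present --- but this is handled by the observation that infeasible deviations cost $\infty$ and are therefore never profitable, so the classical argument of \cite{Anshelevich2008} applies verbatim. Everything else reduces to the bound $f_e(x)\le p_e\le x_e f_e(x_e)$, i.e.\ properties (2') and (3') of the scheme $\mathcal{F}^*$, and a factor of $n$ from $x_e\le n$.
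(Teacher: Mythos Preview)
Your proof is correct and follows essentially the same potential-function argument as the paper: bound $\cost_{sc}$ above by $\Phi$ via non-increasingness, bound $\Phi(\s^*_{sc})$ by $n\cdot\cost_{sc}(\s^*_{sc})$ via properties (3') and (2'), and chain through a Nash equilibrium whose potential is at most $\Phi(\s^*_{sc})$. The only cosmetic difference is that the paper reaches its Nash equilibrium by running improving deviations starting from $\s^*_{sc}$ (invoking Proposition~\ref{prop:potential}), whereas you take the global potential minimizer among feasible profiles; both yield a Nash equilibrium with potential at most $\Phi(\s^*_{sc})$, so the distinction is immaterial.
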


\begin{proof}
Let $\s^*$ be an optimal strategy profile under sum-cost.
Consider agents repeatedly deviate from $\s^*$ to reduce their costs. 
Eventually, this procedure results in a Nash equilibrium $\s$  by the proof of Proposition \ref{prop:potential}.

Recall that the change $\Phi(\s)-\Phi(s'_j,\s_{-j})$ from $\s$ to a new strategy profile $(s'_j,\s_{-j})$ equals the change of the cost of agent $j$ \cite{Monderer1996}.
Thus,  $\Phi(\s)\le\Phi(\s^*)$ holds. 

By property (3’)  of  our  generalized  cost-sharing  scheme, for any edge $e\in E$ and strategy profile $\s$, $\sum_{x=1}^{x_e(\s)}f_e(x) \le {np_e}$ holds.
Then we transform the potential function $\Phi(\s^*)$ for  strategy profile $\s^*$ as follows.

\begin{align}\label{inequality:max:bound}
    \Phi(\textbf{s}^*)&= \sum_{e\in E}\sum_{x=1}^{x_e(\textbf{s}^*)}f_e(x) \nonumber \\
    &= \sum_{e\in E(\s^*)}\sum_{x=1}^{x_e(\textbf{s}^*)}f_e(x)  \\
   & \le \sum_{e \in E(\s^*)}np_e \nonumber  \\
   &\le n\cdot \cost_{sc}(\s^*) \nonumber 
\end{align}

{By Lemma \ref{lem:non-increasing}}, we have
    $\cost_{sc}(\textbf{s})\le\Phi(\textbf{s})\le\Phi(\textbf{s}^*)\le  n\cdot \cost_{sc}(\textbf{s}^*)$.
Therefore, 
    $\PoS_{sc}(\Delta, \mathcal{F}^*) \le {n\cdot \cost_{sc}(\textbf{s}^*)}/{\cost_{sc}(\textbf{s}^*)}= n$.
\end{proof}

\begin{lemma}\label{lem:lbscpos}
There exists a CSCSG $(\Delta, \mathcal{F}^*)$ with $\PoS_{sc}=\PoSLsum$ on parallel-link graphs. 
\end{lemma}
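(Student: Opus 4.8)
The plan is to construct an explicit parallel-link instance with $n$ agents and an appropriate cost-sharing function in $\mathcal{F}^*$, for which the unique (or best) Nash equilibrium has sum-cost exactly $\PoSLsum$ times the optimum. The natural template is the classical parallel-link graph with two edges $e_1$ and $e_2$ between $s$ and $t$: give $e_1$ a small cost so that the social optimum routes all $n$ agents on $e_1$, and give $e_2$ a cost (together with a capacity) chosen so that the "everybody on $e_1$" profile is \emph{not} a Nash equilibrium, i.e.\ some agent prefers to jump to $e_2$. To make the optimum-profile unstable while keeping PoA-type blowups small, the overhead permitted by $\mathcal{F}^*$ is essential: we want $f_{e_1}(n)$ to be large enough (close to $p_{e_1}$ rather than $p_{e_1}/n$) that one agent on $e_1$ pays more than it would pay alone on $e_2$. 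Concretely I would try $p_{e_1}=n$ with $c_{e_1}=n$ and a cost-sharing function $f_{e_1}$ that is $p_{e_1}=n$ for $x=1$ but drops only slightly, e.g.\ $f_{e_1}(x)=n/x$ for $x\ge 2$ except we inflate $f_{e_1}(n)$ — actually cleaner: set $f_{e_1}(x) = p_{e_1}/x$ for all $x$ on $e_1$ (ordinary sharing) and instead make $e_2$ the expensive edge with a carefully tuned overhead, then check which configuration yields the ratio $\PoSLsum = n + 1/n - 1$.

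The key computation is to identify the Nash equilibria of the two-link instance and compare their sum-cost to the optimum. On a parallel-link graph with two edges, in any profile some $k$ agents use $e_1$ and $n-k$ use $e_2$ (subject to capacities); the profile is a Nash equilibrium iff no agent on one edge can strictly decrease its payment by switching, which gives the pair of inequalities $f_{e_1}(k) \le f_{e_2}(n-k+1)$ and $f_{e_2}(n-k) \le f_{e_1}(k+1)$ (with the convention that a full edge cannot be joined). I would choose the costs and the functions $f_{e_1},f_{e_2}\in\mathcal{F}^*$ so that the optimum is the all-on-$e_1$ profile with sum-cost $p_{e_1}$ (using property (2'), the optimum is at least $p_{e_1}$, and all-on-$e_1$ achieves it when $f_{e_1}(n)=p_{e_1}/n$), while the best Nash equilibrium is forced to be a split profile whose sum-cost equals $(n + 1/n - 1)\,p_{e_1}$. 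The target ratio $n + \tfrac1n - 1$ strongly suggests a configuration where the equilibrium has exactly one agent alone on $e_2$ paying $p_{e_2}$ with $p_{e_2}$ roughly $(n-1)p_{e_1}$, plus $n-1$ agents sharing $e_1$ paying $p_{e_1}/n$ each, giving sum-cost $p_{e_2} + (n-1)p_{e_1}/n$; matching this to $(n+1/n-1)p_{e_1}$ pins down $p_{e_2} = (n - 1 + 1/n - (n-1)/n)p_{e_1} = (n-1)p_{e_1}$. Then I must arrange, via the overhead freedom in $f_{e_1}$, that the all-on-$e_1$ profile is destabilized (some agent on $e_1$ would pay more than $p_{e_2}$'s share, i.e.\ $f_{e_1}(n) > f_{e_2}(1) = p_{e_2}$ is impossible since $f_{e_1}(n)\le p_{e_1} < p_{e_2}$ — so this exact split cannot be the obstruction; I would instead use a capacity constraint $c_{e_1} < n$ to forbid all-on-$e_1$ as feasible, making the optimum itself a split, or add a third parallel link).

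Given the subtlety above, the main obstacle is reconciling two competing demands: the social optimum should be \emph{cheap} (so we need $f_{e_1}(n)$ small, i.e.\ close to $p_{e_1}/n$, by property (2')), but the only Nash equilibrium should be \emph{expensive} (so the optimum profile must be unstable, which by the NE inequality wants $f_{e_1}(n)$ \emph{large}). Resolving this tension is exactly where the generalization beyond $F_{\ord}$ buys us the jump from $\log n$ to $n+1/n-1$: I expect the right construction uses \emph{edge-specific} capacities to rule out the globally optimal routing as a feasible single-edge profile, combined with a ladder of $n$ parallel links of geometrically spaced costs (in the spirit of the classical $\log n$ lower-bound gadget of \cite{Feldman2015}) where the overhead on each link is tuned to its near-maximum value $f_e(x)\approx p_e$ for $x\ge 2$, so that no agent ever benefits from sharing; the sum-cost of the resulting forced equilibrium then telescopes to $\PoSLsum$ times the optimum. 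I would carry out the steps in this order: (1) fix the graph (parallel links) and capacities; (2) write the per-profile sum-cost and the NE inequalities; (3) solve for edge costs and for the values $f_e(x)$ making the optimum profile non-equilibrium while keeping every $f_e\in\mathcal{F}^*$; (4) verify the chosen split profile is the unique NE and compute the ratio; (5) double-check feasibility of the game and that $n\ge 2$ suffices. Step (3) is the crux.
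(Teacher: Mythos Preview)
Your proposal is an exploration rather than a proof, and the final direction you settle on has a genuine gap. You correctly diagnose the central tension: the optimum profile must be \emph{cheap}, hence $f_{e}(n)$ must be close to $p_e/n$ on the shared edge, yet that same profile must be \emph{unstable}. Your resolution---use capacities to rule out the all-on-one-edge profile as feasible, and build a ladder of $n$ links with geometrically spaced costs and near-maximal overhead $f_e(x)\approx p_e$---does not work. If the single-edge profile is infeasible, then the optimum is itself a split profile, and property~(2') forces its sum-cost to be at least the sum of the $p_e$'s used; with $n$ links this is already $\Theta(n)$ (or worse for geometric spacing), so the ratio collapses far below $n+1/n-1$. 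Geometrically spaced costs are a red herring: that is the $\log n$ gadget for $F_{\ord}$, and here the construction is almost entirely unit-cost.

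The missing idea is a ``fragile sharing plus bait'' mechanism. The paper keeps the all-on-one-edge profile \emph{feasible and cheap}: there is a single edge $e_n$ with capacity $n$, cost $1+\epsilon$, and a cost-sharing function that equals $p_{e_n}/x$ \emph{only} at $x=n$ but jumps to $p_{e_n}$ for every $x<n$. Thus the optimum (everyone on $e_n$) has sum-cost $1+\epsilon$. It is destabilized not by overhead on $e_n$ but by a tiny \emph{bait} edge $e_0$ of cost $1/n$ and capacity $1$: one agent defects to $e_0$, and now the remaining $n-1$ agents on $e_n$ each pay the full $1+\epsilon$, so they scatter to $n-1$ further unit-cost, unit-capacity edges $e_1,\ldots,e_{n-1}$. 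The unique Nash equilibrium has sum-cost $1/n+(n-1)$, giving the ratio $(n-1+1/n)/(1+\epsilon)\to n+1/n-1$. The overhead is exploited on $e_n$ in the opposite way from what you suggest: it is \emph{minimal} at full occupancy and \emph{maximal} everywhere else, so the optimum is cheap but collapses the moment one agent leaves.
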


\begin{figure}[t]
 \centering
 \includegraphics[width=0.4\linewidth]{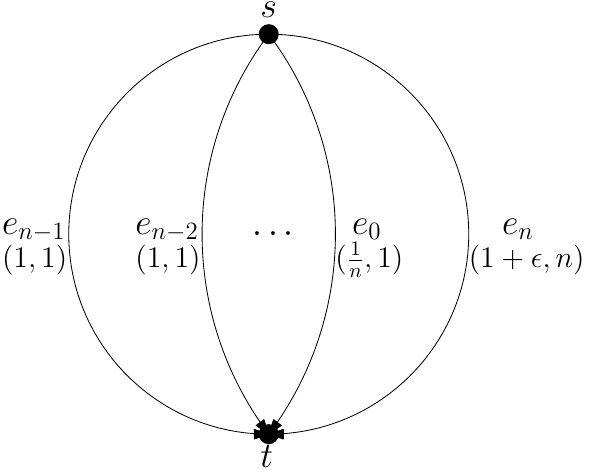}
 \caption{A CSCSG $(\Delta, \mathcal{F}^*)$ on parallel-link graphs with $\PoS_{sc}=\PoSLsum$. The cost and capacity of edge $e$ is denoted by $(p_e, c_e)$.
 \label{fig3}}
\end{figure}

\begin{proof}
Consider the following CSCSG $(\Delta, \mathcal{F}^*)$ with $n$ agents on the parallel-link graph with $n+1$ edges $e_0, \ldots, e_n$, illustrated in Figure \ref{fig3}. 
For edges $e_0$ and $e_n$, we define their costs and capacities by $(p_{e_0},c_{e_0})=(1/n, 1)$ and  
$(p_{e_n},c_{e_n})=(1+\epsilon, n)$. 
Also, the cost and capacity of edge $e_i$ are defined by $(p_{e_i},c_{e_i})=(1, 1)$  for $1\le i\le n-1$. 
Finally, we define the cost-sharing function of edge $e$ as follows:
\begin{align*}
    f_e(x)=\begin{cases}
    p_e & (x\neq n)\\
    p_e/x & (x = n)
    \end{cases}.
\end{align*}
Note that $f_e$ satisfies $f_e(1)=p_e$ and $f_e(x)\ge p_e/x$ 
and thus $f_e$'s belong to $\mathcal{F}^*$.

Let $\s^*$ be a strategy profile where all the agents use $e_n$.
The sum-cost of $\s^*$ is $\cost_{sc}(\s^*)=(1+\epsilon)/n\cdot n=1+\epsilon$. 
Note that $\s^*$ is not a  Nash equilibrium because an agent can reduce the cost from $(1+\epsilon)/n$ to $1/n$ by moving from $e_n$ to $e_0$.

Next, let $\s$ be a strategy profile where agent $i$ uses edge $e_{i-1}$ where $1\le i \le n$.
It is easy to check that $\s$ is a Nash equilibrium because no agent has an incentive to move to $e_n$. 
We then show that no other strategy profile is a Nash equilibrium by contradiction. 
Suppose that $\s'$ is another Nash equilibrium, where at least one agent must use $e_n$. 
The number of agents using $e_n$ is at most $n-1$ because an agent must use $e_0$ of the smallest cost.   
This implies that the cost of an agent using $e_n$ is $1+\epsilon$ by the definition of $f_e$. Then, 
an agent using $e_n$ can reduce the cost from $1+\epsilon$ to $1$ by moving from $e_n$ to an empty edge between $e_1$ to $e_{n-1}$ (such an edge must exist); $\s'$ cannot be a Nash equilibrium. 
This shows that $\s$ is the unique Nash equilibrium. 
The sum cost in strategy profile $\textbf{s}$ is $\cost_{sc}(\s)=n-1+1/n$.


Therefore,
    $\PoS_{sc}(\Delta, \mathcal{F}^*)
    \ge \cost_{sc}(\textbf{s})/\cost_{sc}(\textbf{s}^*)
    =(n-1+{1}/{n})/(1+\epsilon)$.
When $\epsilon$ is arbitrarily small,
$\PoS_{sc}(\Delta, \mathcal{F}^*)$ becomes $\PoSLsum$.
\end{proof}

\begin{theorem}\label{thm:PoS_sc}
For any  CSCSG $(\Delta, \mathcal{F}^*)$, 
$\PoS_{sc}$ is at most $n$. 
Furthermore, there is a CSCSG $(\Delta, \mathcal{F}^*)$ with $\PoS_{sc}=\PoSLsum$ even on parallel-link graphs.
\end{theorem}

\section{Capacitated Cost-Sharing Connection Games under Max-Cost Criterion}\label{sec:MaxCost}
In this section, we give the tight bounds of PoA and PoS of {\rm CSCSG} under max-cost.

\subsection{Price of Anarchy (PoA)}
Feldman et al. show that $\PoA_{mc}$ is unbounded on  undirected graphs  \cite{Feldman2015}. 
As with the sum-cost case, we can show that $\PoA_{mc}$ is unbounded on directed graphs by transforming from an  undirected graph to a directed graph in \cite{Erlebach2015}. 
However, recall that the transformation yields directed cycles; hence we give stronger results on restricted graph classes.








\subsubsection{Directed acyclic graphs}
We show that $\PoA_{mc}$ of a CSCSG $(\Delta, F_{\ord})$ is unbounded even on DAGs.
More precisely, $\PoA_{mc}$ of the CSCSG with two agents illustrated in Figure \ref{Fig:DAG} is unbounded.

\begin{theorem}\label{thm:DAG:max}
There exists a CSCSG $(\Delta, F_{\ord})$ on DAGs such that $\PoA_{mc}$ is unbounded. 
\end{theorem}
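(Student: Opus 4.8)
The plan is to recycle the very DAG that already witnessed unboundedness of $\PoA_{sc}$ in the proof of Theorem~\ref{thm:DAG:sum} and to re-read its two relevant strategy profiles under the max-cost criterion. Recall that in the two-agent game of Figure~\ref{Fig:DAG} every capacity equals $1$, the five ``straight'' edges $(s,a),(a,c),(c,t),(s,b),(b,t)$ have cost $x$, and the two ``crossing'' edges $(a,b),(b,c)$ have cost $y$, with $x<y$. First I would take the profile $\s$ in which one agent routes along $s\to a\to b\to t$ and the other along $s\to b\to c\to t$. These two routes are edge-disjoint, and since all capacities are $1$, a quick inspection of the (at most five) $s$--$t$ paths of the graph shows that, given the other agent's route, each agent's current route is its \emph{only} feasible alternative; hence $\s$ is trivially a Nash equilibrium, and both agents pay $x+y+x$, so $\cost_{mc}(\s)=2x+y$.

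Next I would determine the max-cost optimum by a short finite case analysis. A feasible strategy profile here is exactly an unordered pair of edge-disjoint $s$--$t$ paths, and enumerating these pairs yields the max-costs $3x$ (one agent on $s\to a\to c\to t$, the other on $s\to b\to t$), $2x+y$ (the profile $\s$, up to swapping the two agents), and $2x+2y$ (one agent on $s\to a\to b\to c\to t$, the other on $s\to b\to t$). Since $x<y$, the smallest of these is $3x$, i.e.\ $\cost_{mc}(\s^*_{mc})=3x$. Therefore
\[
\PoA_{mc}\ \ge\ \frac{\cost_{mc}(\s)}{\cost_{mc}(\s^*_{mc})}\ =\ \frac{2x+y}{3x},
\]
and choosing $y=x^2$ and letting $x$ grow drives the right-hand side to infinity.

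The only step demanding care is the optimality claim for $\s^*_{mc}$, which hinges on having listed \emph{all} pairs of edge-disjoint $s$--$t$ paths in the graph; since it has only five vertices this is a routine but genuinely finite enumeration, and it is the part I would spell out in full. The Nash-equilibrium verification and the arithmetic are immediate. Finally, note that taking cost-sharing functions in $F_{\ord}$ is inessential: no edge is ever shared in any feasible profile, so $f_e(x_e)=p_e/x_e=p_e$ throughout, and the identical example establishes the bound for every $F\in\mathcal{F}^*$.
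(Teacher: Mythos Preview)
Your proposal is correct and follows essentially the same approach as the paper: the paper re-uses the two-agent DAG of Figure~\ref{Fig:DAG}, observes that the profile $(s\!\to\! a\!\to\! b\!\to\! t,\; s\!\to\! b\!\to\! c\!\to\! t)$ is a Nash equilibrium with max-cost $2x+y$, compares it to the profile with max-cost $3x$, and lets $y/x\to\infty$. Your write-up is in fact more careful than the paper's, since you explicitly justify both the Nash-equilibrium claim (unique feasible deviation) and the optimality of $3x$ via the enumeration of edge-disjoint path pairs, whereas the paper leaves both as ``easy to see''; your closing remark that the example works for every $F\in\mathcal F^*$ is an additional observation the paper does not make.
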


\begin{proof}
We show that $\PoA_{mc}$ of the CSCSG with two agents illustrated in Figure \ref{Fig:DAG} is unbounded.
The strategy is the same as the sum-cost case. 
Consider that one agent uses path $s\rightarrow a \rightarrow c \rightarrow t$, and the other uses path $s\rightarrow b \rightarrow t$. Then the max cost of this strategy profile is $3x$.

On the other hand, consider a Nash equilibrium such that one agent uses path $s\rightarrow a \rightarrow b \rightarrow t$ and the other agent uses path $s\rightarrow b\rightarrow c\rightarrow t$. Then its max-cost is $2x+y$. Thus, we have $\PoA_{mc}\ge (2x+y)/3x=2/3+y/3x$. Because $x$ and $y$ are arbitrary where $x<y$, $\PoA_{mc}$  can be unbounded.
\end{proof}

\subsubsection{SP graphs}
For any CSCSG $(\Delta, \mathcal{F}^*)$ on SP graphs, we show that $\PoA_{mc}$  is at most $n$, and it is tight.

\begin{lemma}\label{lem:ubmcpoa}
For any CSCSG $(\Delta, \mathcal{F}^*)$ on SP graphs, $\PoA_{mc}$ is at most $n$.
\end{lemma}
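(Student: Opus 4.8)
The plan is to mirror the structure used for the sum-cost case (Lemmas \ref{lem:cost} and \ref{lem:ubscpoa}), but to compare each agent's cost in a Nash equilibrium against the \emph{max-cost} of the optimal profile rather than its sum-cost. Let $\s^*$ be an optimal strategy profile under max-cost and let $\s$ be a Nash equilibrium of $(\Delta, \mathcal{F}^*)$. The key claim to establish is that the cost $p_i(\s)$ of every agent $i$ in $\s$ is at most $n\cdot\cost_{mc}(\s^*)$; once this is shown, $\cost_{mc}(\s)=\max_i p_i(\s)\le n\cdot\cost_{mc}(\s^*)$, and dividing by $\cost_{mc}(\s^*)$ gives $\PoA_{mc}\le n$.

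To prove the claim, fix an agent $i$ and consider the residual profile $\s_{-i}$ of the other $n-1$ agents, which is feasible. Since the game is on an SP graph, Lemma \ref{lem:feasible} (applied with $r=n-1$ agents playing $\s_{-i}$ and $k=n$ agents playing $\s^*$, noting $r<k$ requires a mild adjustment—see below) yields an $s$-$t$ path $s'$ using only edges of $\s^*$ such that $(\s_{-i}, s')$ is feasible. If agent $i$ deviated to $s'$, its cost would be at most $\sum_{e\in E(s')} p_e$ by property (3') exactly as in the proof of Lemma \ref{lem:cost}. Now $E(s')\subseteq E(\s^*)$, and $\sum_{e\in E(\s^*)}p_e\le\sum_{e\in E(\s^*)}\sum_{x=1}^{x_e(\s^*)}f_e(x)=\Phi(\s^*)$ by property (2'); combining with inequality \eqref{inequality:max:bound}-style reasoning, $\sum_{e\in E(\s^*)}p_e\le n\cdot\cost_{mc}(\s^*)$ because each of the $n$ agents in $\s^*$ pays at least $\cost_{sc}(\s^*)/\text{(something)}$—more directly, $\sum_{e\in E(\s^*)}p_e\le\sum_{e\in E(\s^*)}\sum_{x=1}^{x_e(\s^*)}f_e(x)=\sum_j p_j(\s^*)\le n\cdot\max_j p_j(\s^*)=n\cdot\cost_{mc}(\s^*)$, where the equality $\Phi(\s^*)\ge\sum_e p_e$ uses property (2'). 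Since $\s$ is a Nash equilibrium, $p_i(\s)\le\sum_{e\in E(s')}p_e\le n\cdot\cost_{mc}(\s^*)$, as desired.

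The main obstacle is the precise invocation of Lemma \ref{lem:feasible}: as stated it requires $r<k$, so with $r=n-1$ and $k=n$ it applies directly, producing a feasible augmentation of $\s_{-i}$ by one path drawn from the edge set of $\s^*$—this is exactly what we need, so in fact there is no real gap here, only the need to state it carefully. A secondary subtlety is ensuring the deviation path $s'$ is genuinely available to agent $i$ (i.e., that $(\s_{-i},s')$ respects all capacities), which is precisely the content of Lemma \ref{lem:feasible}, so nothing further is needed. The remaining steps are routine applications of properties (2') and (3') together with the Nash condition, entirely parallel to Lemmas \ref{lem:cost}–\ref{lem:ubscpoa}.
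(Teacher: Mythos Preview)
Your approach is essentially the same as the paper's: both bound each agent's Nash-equilibrium cost by the cost of deviating to a path contained in $E(\s^*)$ (via Lemma~\ref{lem:feasible} and property~(3')), and then bound that by $\cost_{sc}(\s^*)\le n\cdot\cost_{mc}(\s^*)$. The paper packages the first step as a black-box citation of Lemma~\ref{lem:cost} (applied to the sum-cost optimum $\s^{**}$, followed by $\cost_{sc}(\s^{**})\le\cost_{sc}(\s^*)$), whereas you unroll that argument directly against the max-cost optimum $\s^*$; this is a cosmetic difference only.

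There is one genuine slip in your chain. You write
\[
\sum_{e\in E(\s^*)}\sum_{x=1}^{x_e(\s^*)}f_e(x)=\sum_j p_j(\s^*),
\]
but the left-hand side is $\Phi(\s^*)$, not $\cost_{sc}(\s^*)$, and in general $\Phi(\s^*)\ge\cost_{sc}(\s^*)$ with strict inequality possible (see Appendix~\ref{prf:potential}); so that ``equality'' does not hold and cannot be used in the direction you need. The inequality you actually want, $\sum_{e\in E(\s^*)}p_e\le\cost_{sc}(\s^*)$, follows directly from property~(2') via
\[
\cost_{sc}(\s^*)=\sum_{e\in E(\s^*)}x_e(\s^*)\,f_e(x_e(\s^*))\ \ge\ \sum_{e\in E(\s^*)}p_e,
\]
with no detour through the potential. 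With this correction your argument is complete and matches the paper's.
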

\begin{proof}

Let $\s^*$ be an optimal strategy profile under the max cost criterion, and $\s$ be a Nash equilibrium.
By Lemma \ref{lem:cost}, the cost of each agent in $\s$ is at most $\cost_{sc}(\textbf{s}^{**})$ where $\textbf{s}^{**}$ is an optimal strategy profile under sum-cost. Thus, $\cost_{mc}(\textbf{s})\le \cost_{sc}(\textbf{s}^{**})$ holds. By the optimality of $\textbf{s}^{**}$ under sum-cost, we have $\cost_{sc}(\textbf{s}^{**})\le \cost_{sc}(\textbf{s}^*)$. Finally, by the definition of max-cost,  $\cost_{sc}(\textbf{s}^*)
    \le n\cdot \cost_{mc}(\textbf{s}^*)$ holds. Summarizing these inequalities, we obtain $\cost_{mc}(\textbf{s})\le n\cdot \cost_{mc}(\textbf{s}^*)$.
Thus, 
we have
  $\PoA_{mc}(\Delta,\mathcal{F}^*)
  \le {n\cdot \cost_{mc}(\textbf{s}^*)}/{\cost_{mc}(\textbf{s}^*)}
  =n$.
\end{proof}

On the other hand, we observe that $\PoA_{mc}$ of the game used in Lemma \ref{lem:lbscpoa} is $n$.
Therefore, we obtain Theorem \ref{thm:mcpoa} as follows. 
\begin{theorem}\label{thm:mcpoa}
For any CSCSG $(\Delta, \mathcal{F}^*)$ on SP graphs, $\PoA_{mc}$ is at most $n$.
Furthermore, there is a CSCSG with $\PoA_{mc}=n$ even on parallel-link graphs. 
\end{theorem}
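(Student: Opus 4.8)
The plan is to assemble the statement from two halves, both of which are essentially in hand. The upper bound $\PoA_{mc}(\Delta,\mathcal{F}^*)\le n$ on SP graphs is precisely Lemma \ref{lem:ubmcpoa}, so for that direction I would simply invoke it. What remains is to exhibit a single parallel-link instance whose max-cost price of anarchy equals $n$, and the natural candidate is the instance already used for the sum-cost lower bound in Lemma \ref{lem:lbscpoa}, now analyzed under the max-cost criterion.

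Concretely, recall that instance: a parallel-link graph on two vertices $s,t$ with two directed edges $e_1,e_2$ of costs $p_{e_1}=1$ and $p_{e_2}=n$ and capacities $c_{e_1}=c_{e_2}=n$, together with $n$ agents and the ordinary cost-sharing functions $F_{\ord}$ (which lies in $\mathcal{F}^*$). First I would pin down the optimum under max-cost: since any $s$-$t$ path here is a single edge, an agent on $e_1$ shared by $x\le n$ agents pays $1/x\ge 1/n$ and an agent on $e_2$ pays at least $n/n=1$, so the smallest achievable per-agent charge is $1/n$, attained by routing all $n$ agents through $e_1$; hence $\cost_{mc}(\s^*_{mc})=1/n$. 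Next I would exhibit the bad equilibrium: route all $n$ agents through $e_2$, so each pays $f_{e_2}(n)=n/n=1$, and a unilateral deviation to $e_1$ leaves the deviator alone there paying $f_{e_1}(1)=1$, which is not an improvement; thus this profile is a Nash equilibrium with $\cost_{mc}=1$. Dividing gives $\PoA_{mc}\ge 1/(1/n)=n$, which together with Lemma \ref{lem:ubmcpoa} shows the bound is tight.

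I do not expect a genuine obstacle. The only points needing a line of care are that the instance is admissible in our setting (it is, since $F_{\ord}\in\mathcal{F}^*$, so sharing $e_1$ among all agents really does cost $1/n$ each) and that ``all agents on $e_2$'' is genuinely an equilibrium, which relies on the Nash condition being stated with a weak inequality so that the tie against deviating to $e_1$ is permitted. Should a strict equilibrium be preferred, I would instead take $p_{e_2}=n-\delta$ for a small $\delta>0$, obtaining equilibrium max-cost $1-\delta/n$ and hence $\PoA_{mc}\ge n-\delta\to n$ as $\delta\to 0$; but this refinement is unnecessary.
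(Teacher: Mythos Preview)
Your proposal is correct and mirrors the paper's own argument: the upper bound is Lemma~\ref{lem:ubmcpoa}, and the lower bound reuses the parallel-link instance of Lemma~\ref{lem:lbscpoa} (two edges of costs $1$ and $n$, capacities $n$, with $F_{\ord}\in\mathcal{F}^*$), observing that routing all agents on the expensive edge is a Nash equilibrium with $\cost_{mc}=1$ while the optimum has $\cost_{mc}=1/n$. The paper states this as a one-line observation without spelling out the max-cost computation, but your expanded verification is exactly what underlies it.
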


\subsection{Price of Stability (PoS)}

For the lower bound of $\PoS_{mc}$, Feldman and Ron show that there is a CSCSG instance on an undirected parallel-link graph with two vertices such that $\PoS_{mc}$ is $n$  \cite{Feldman2015}.
By orienting all the edges from a vertex to the other vertex in the undirected parallel-link graph of the CSCSG, we have the same lower bound of $\PoS_{mc}=n$ on directed parallel-link graphs. 
In the following theorem, we then give a tight upper bound to $\PoS_{mc}$.

\begin{theorem}\label{thm:mcpos}
For any CSCSG $(\Delta, \mathcal{F}^*)$, $\PoS_{mc}$ is at most $n$, and it is tight. 
\end{theorem}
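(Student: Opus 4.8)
The tightness claim is already in hand: the directed orientation of Feldman and Ron's two-vertex parallel-link instance discussed just above witnesses $\PoS_{mc}\ge n$, and since $F_{\ord}\in\mathcal{F}^*$ that instance is in particular an instance of $(\Delta,\mathcal{F}^*)$. So the content of the theorem is the upper bound $\PoS_{mc}\le n$.

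The plan is to reduce this to a comparison of the two social-cost criteria at optimality. Let $\s^*$ be an optimal profile under max-cost. Since the maximum of the $n$ agents' costs is at most their sum, $\cost_{sc}(\s^*)=\sum_j p_j(\s^*)\le n\cdot\cost_{mc}(\s^*)$. Hence it suffices to exhibit \emph{one} Nash equilibrium $\s$ with $\cost_{mc}(\s)\le\cost_{sc}(\s^*)$, for then $\PoS_{mc}\le\cost_{sc}(\s^*)/\cost_{mc}(\s^*)\le n$. This is the kind of bound provided on series-parallel graphs by Lemma \ref{lem:cost} (every agent in a Nash equilibrium pays at most the optimal sum-cost); on general graphs Lemma \ref{lem:feasible} has no analogue, so one cannot quantify over \emph{all} equilibria and must instead single out a good one.

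I would obtain such an equilibrium by running best-response dynamics from $\s^*$; by Proposition \ref{prop:potential} (the potential $\Phi$ strictly decreases at each improving move) this terminates at a Nash equilibrium $\s$, and the task becomes the invariant that at every intermediate profile every agent pays at most $\cost_{sc}(\s^*)$. For an agent that has not yet deviated the invariant is immediate: its path still lies in $E(\s^*)$, so by property (3') its cost is at most $\sum_{e\in E(\s^*)}p_e$, which by property (2') is at most $\cost_{sc}(\s^*)$. The subtle case is an agent $j$ that has already moved: right after its improving move its cost strictly drops and so stays below $\cost_{sc}(\s^*)$ inductively, but later moves of other agents \emph{off} the edges of $j$'s current path raise $j$'s share on those edges, and this inflation can in principle compound. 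Controlling it — equivalently, scheduling the best-response moves so that the invariant is maintained — is the heart of the proof; here I would follow the treatment of the ordinary cost-sharing case in \cite{Erlebach2015}, the one point where the generalization enters being that its use of the identity $x f_e(x)=p_e$ (valid only when $f_e(x)=p_e/x$) must be replaced by the two one-sided bounds $p_e/x\le f_e(x)\le p_e$ of the scheme $\mathcal{F}^*$, which are exactly what is needed to estimate both the cost released by a departing user of an edge and the cost absorbed by its remaining users.

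The step I expect to be the real obstacle is precisely this cost-inflation control. On series-parallel graphs it is bypassed by the rerouting guarantee of Lemma \ref{lem:feasible}; on general graphs no such guarantee holds — its absence is what already makes $\PoA_{mc}$ unbounded on DAGs (Theorem \ref{thm:DAG:max}) — so the $\PoS_{mc}$ bound must genuinely construct, and then verify, a single good equilibrium rather than bound the whole equilibrium set.
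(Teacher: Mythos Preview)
Your setup is correct: tightness is inherited from the Feldman--Ron instance, and for the upper bound it suffices to exhibit one Nash equilibrium $\s$ with $\cost_{mc}(\s)\le\cost_{sc}(\s^*)$, since $\cost_{sc}(\s^*)\le n\cdot\cost_{mc}(\s^*)$. Your remark that the only place the generalization enters is in replacing the identity $x f_e(x)=p_e$ by the one-sided bounds~(2') and~(3') is also right. The gap is in how that good equilibrium is produced.

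You describe the mechanism as maintaining, along a best-response trajectory from $\s^*$, the invariant that every agent's cost stays below $\cost_{sc}(\s^*)$, with the hard part being a scheduling of moves that controls cost inflation. This is \emph{not} what \cite{Erlebach2015} does, and no such scheduling argument is known: once an agent $j$ has deviated onto edges outside $E(\s^*)$, later departures of other agents from those edges can drive $j$'s cost above $\cost_{sc}(\s^*)$, and nothing in the problem structure prevents this. The paper (following \cite{Erlebach2015}) instead lets best-response reach an arbitrary equilibrium $\s$. If $\cost_{mc}(\s)\le\cost_{sc}(\s^*)$ we are done; otherwise one removes the max-cost agent $i$ (dropping $\Phi$ by $\cost_{mc}(\s)>\cost_{sc}(\s^*)$) and uses a \emph{max-flow/augmenting-path} argument on the graph with edge set $E(\s^*)\cup E(\s_{-i})$ and capacities $\bar c(e)=\max\{x_e(\s^*),x_e(\s_{-i})\}$: since $\s^*$ routes $n$ units of $s$--$t$ flow while $\s_{-i}$ routes only $n-1$, the residual network of $\s_{-i}$ contains an augmenting $s$--$t$ path $s'$, and every forward edge of $s'$ necessarily lies in $E(\s^*)$. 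Reinserting agent $i$ along $s'$ raises $\Phi$ by at most $\sum_{e\in E(s')} f_e(x_e+1)\le\sum_{e\in E(s')} p_e\le\cost_{sc}(\s^*)$ (this is precisely where (3') and (2') are used), so $\Phi$ has strictly decreased overall. Rerun best-response from the new profile and iterate; finiteness of the range of $\Phi$ forces termination at an equilibrium with $\cost_{mc}\le\cost_{sc}(\s^*)$. This flow argument is the substitute for Lemma~\ref{lem:feasible} on general graphs, and it is the piece missing from your sketch.
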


\begin{proof}
The outline of the proof follows that of \cite[Theorem 3]{Erlebach2015}, though we extend it to our generalized setting. 

Let $\s^*$ be an optimal strategy profile under max-cost.
As with Lemma \ref{lem:ubscpos}, consider that agents repeatedly deviate from $\s^*$ to reduce their costs. 
By Proposition \ref{prop:potential}, we obtain a Nash equilibrium $\s$ at the end of the deviations.
Without loss of generality, we can scale the edge costs so that $\cost_{sc}(\s^*)=n$, and as a result, we have $\cost_{mc}(\s^*)\ge 1$. 

If $\cost_{mc}(\textbf{s})\le n$, then we obtain $\PoS_{mc}(\Delta, \mathcal{F}^*)\le n$.
Otherwise, the following inequality holds: 
    $n < \cost_{mc}(\textbf{s}) \le \cost_{sc}(\textbf{s})\le \Phi(\textbf{s})<\Phi(\textbf{s}^*)$.
Note that the third inequality comes from Lemma \ref{lem:non-increasing} and the fourth inequality comes from the definition of the potential function.
For some $\alpha, \beta, \delta > 0$, let $\Phi(\textbf{s}^*)=n+\alpha$, $\cost_{mc}(\textbf{s})=n+\beta$, and $\Phi(\textbf{s})=\Phi(\textbf{s}^*)-\delta$. 
Note that $0<\beta\le\alpha-\delta$. 

Let $\s_{-i}$ be the partial strategy profile of $n-1$ agents except for agent $i$ who pays $\cost_{mc}(\s)$. 
Because the change of the potential function equals the change of the cost of an agent who deviates, we have $\Phi(\s_{-i})=\Phi(\s) -\cost_{mc}(\textbf{s})$.

We construct a strategy profile $\s'$ of $n$ agents \textcolor{black}{based on} $\s^*$ and $\s_{-i}$ as follows. We define $\hat{G}=(V, \hat{E})$  where  $\hat{E}=E(\s^*)\cup E(\s_{-i})$ and $\hat{c}(e)=\max\{x_e(\textbf{s}^*), x_e(\textbf{s}_{-i})\}$ as a directed and capacitated graph. 
Now, we have $n-1$  paths in the strategy profile $\s_{-i}$ and $n$ paths in the strategy profile $\s^*$. 
Here, we regard the strategy profile $\s_{-i}$ as an $s$-$t$ flow and let $\hat{G}_{\s_{-i}}$ be the residual network of $\hat{G}$ for $\s_{-i}$.
Since $\hat{G}$ admits a flow of value $n$ while the value of the flow constructed from $\s_{-i}$ is $n-1$,   
there is an augmenting $s$-$t$ path \textcolor{black}{$a$} in $\hat{G}_{\s_{-i}}$. 
\textcolor{black}{Let $\s'$ be the resultant flow of value $n$ by the augmenting $s$-$t$ path $a$ and $F$ be the set of edges in $\hat{E}$ whose users increase in $\s'$.  
We regard $\s'$ as the strategy profile of $n$ agents by decomposing it into $n$ paths.}


\textcolor{black}{By the definition of the potential function $\Phi(\textbf{s})=\sum_{e\in E}\sum_{x=1}^{x_e(\s)}f_e(x)$ and property (3') of $\mathcal{F}^*$, we have: 
\begin{align*}
    \Phi(\s')\le \Phi(\s_{-i}) + \sum_{e\in F} f_e(x_e(\s_{-i})+1) \le \Phi(\s_{-i}) + \sum_{e\in F} p_e. 
\end{align*}
Note that the number of users of each edge in $F$ increases by at most $1$.}

\textcolor{black}{
We then observe that $x_e(\s^*)>0$ holds for every edge $e\in F$ because each edge $e$ in $E(\s_{-i})$ has already been used by $x_e(\s_{-i})$ agents. Thus, we have $\sum_{e\in F} p_e\le \cost_{sc}(\s^*)=n$ by property (2') of $\mathcal{F}^*$.
Since  $\Phi(\s_{-i})=\Phi(\s) -\cost_{mc}(\s)$ and $\cost_{mc}(\s)=n+\beta$, we have: 
    \begin{align*}
    \Phi(\s') &\le \Phi(\s_{-i})+\sum_{e\in F} p_e\\
    &\le \Phi(\textbf{s}_{-i})+n\\
    &=\Phi(\textbf{s})- \cost_{mc}(\s)+n\\
    &=\Phi(\textbf{s})- (n+\beta)+n\\
    &=\Phi(\textbf{s})-\beta\\
    &<\Phi(\textbf{s}).
    \end{align*}
}

Let $\s''$ be a Nash equilibrium obtained from $\s'$ by the deviations of agents.
Then $\Phi(\textbf{s}'')\le\Phi(\textbf{s}')<\Phi(\textbf{s})$.

If $\cost_{mc}(\textbf{s}'')\le n$, then $\PoS_{mc}(\Delta, \mathcal{F}^*)\le n$ due to $\cost_{mc}(\s^*)\ge 1$. Thus, we are done.
Otherwise, we repeat the above procedure starting from $\s''$ instead of $\s$. 
For each time, 
if we obtain a Nash equilibrium under max-cost higher than $n$, it has strictly less potential than the previous Nash equilibrium. 
Since the number of strategy profiles is finite, we eventually obtain a Nash equilibrium whose cost is at most $n$. 
Therefore, $\PoS_{mc}(\Delta, \mathcal{F}^*)\le n$ holds.
\end{proof}

\color{black}

\section{Asymmetric Games}\label{sec:asymmetric}

This section considers the \emph{capacitated asymmetric cost-sharing connection game (CACSG)}, where agents have different source and sink nodes. 
Because the CACSG is a generalization of CSCSG, the lower bound of CSCSG holds for the CACSG. 
\begin{theorem}
For any CACSG $(\Delta, \mathcal{F}^*)$, $\PoA_{sc}$ and $\PoA_{mc}$ are unbounded even on DAGs.
\end{theorem}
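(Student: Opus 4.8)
The plan is to exploit the fact that a CACSG strictly generalizes a CSCSG: a symmetric game is just the special case in which every agent $j$ is assigned the same terminal pair $(s_j,t_j)=(s,t)$. So any CSCSG lower-bound instance is, verbatim, a CACSG instance, and it suffices to point to the examples already built for the symmetric case rather than to construct anything new.

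Concretely, I would take the DAG $(\Delta, F_{\ord})$ of Figure~\ref{Fig:DAG} underlying Theorems~\ref{thm:DAG:sum} and~\ref{thm:DAG:max}, keep the two agents, and declare that both of them have source $s$ and sink $t$. Since $F_{\ord}\in\mathcal{F}^*$, this is a legitimate game in the class $(\Delta,\mathcal{F}^*)$. Because all terminals coincide, the strategy sets, the payment functions $p_j$, and hence $\NE(\Delta,F_{\ord})$ as well as the social optima under both criteria are literally unchanged from the symmetric reading; in particular the equilibria identified in the proofs of Theorems~\ref{thm:DAG:sum} and~\ref{thm:DAG:max} remain equilibria here. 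Therefore the ratios $(4x+2y)/(5x)$ and $(2x+y)/(3x)$ computed there apply without modification, and taking $y=x^2$ with $x\to\infty$ shows that both $\PoA_{sc}$ and $\PoA_{mc}$ are unbounded for CACSG on DAGs.

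There is essentially no obstacle to overcome; the whole argument is the one-line observation that the symmetric class is contained in the asymmetric one. The only point worth stating explicitly is that the CACSG model places no distinctness requirement on the pairs $(s_j,t_j)$, so the degenerate configuration in which all agents share a terminal pair is permitted — once that is granted, the claim is immediate. (If a genuinely asymmetric witness were desired, one could additionally split $s$ into per-agent copies $s_1,s_2$ joined to the old $s$ by zero-cost, unit-capacity edges, leaving all costs and equilibria intact; but this refinement is not needed for the statement as phrased.)
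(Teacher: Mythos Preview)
Your proposal is correct and follows exactly the paper's approach: the paper gives no separate proof but simply notes that ``Because the CACSG is a generalization of CSCSG, the lower bound of CSCSG holds for the CACSG,'' invoking Theorems~\ref{thm:DAG:sum} and~\ref{thm:DAG:max}. Your write-up is just a more explicit rendering of that one-line observation.
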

As for the sum-cost criterion, it is easily seen that Lemma \ref{lem:ubscpos} holds for the asymmetric case.
\begin{theorem}
For any CACSG $(\Delta, \mathcal{F}^*)$, $\PoS_{sc}$ is at most $n$. Furthermore, there is a CACSG $(\Delta, \mathcal{F}^*)$ with $\PoS_{sc}=\PoSLsum$ even on parallel-link graphs.
\end{theorem}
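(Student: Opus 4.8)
The plan is to split the statement into its two halves and reduce each to a result already available for the symmetric game, so essentially no new machinery is required.

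For the upper bound $\PoS_{sc}(\Delta, \mathcal{F}^*)\le n$, the key point I would establish first is that the asymmetric game is still an exact potential game with precisely the same potential $\Phi(\s)=\sum_{e\in E}\sum_{x=1}^{x_e(\s)}f_e(x)$ used in Proposition \ref{prop:potential}. The verification is the standard congestion-game computation: when a single agent switches from one admissible path to another, the edges common to both paths contribute nothing to either the change of $\Phi$ or the change of that agent's cost, while on the symmetric difference each edge's congestion changes by exactly one, so $\Phi$ changes by exactly the agent's cost change. This argument never uses that the agents share a common $s$--$t$ pair, so Proposition \ref{prop:potential} (existence of a pure Nash equilibrium) carries over to CACSG, and then I would rerun the proof of Lemma \ref{lem:ubscpos} verbatim: start from an optimum $\s^*$, let agents perform improving deviations; $\Phi$ strictly decreases, so after finitely many steps we reach a Nash equilibrium $\s$ with $\Phi(\s)\le\Phi(\s^*)$; property (2') of $\mathcal{F}^*$ gives $\Phi(\s^*)=\sum_{e\in E(\s^*)}\sum_{x=1}^{x_e(\s^*)}f_e(x)\le n\cdot\cost_{sc}(\s^*)$, and non-increasingness of each $f_e$ gives $\cost_{sc}(\s)\le\Phi(\s)$ (as in Appendix \ref{prf:potential}), whence $\cost_{sc}(\s)\le n\cdot\cost_{sc}(\s^*)$.

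For the lower bound I would simply observe that CSCSG is the special case of CACSG in which all $n$ agents happen to share one source--sink pair, so the parallel-link instance constructed in the proof of Lemma \ref{lem:lbscpos} is already a CACSG instance; letting $\epsilon\to 0$, its $\PoS_{sc}$ equals $\PoSLsum$, which gives the claim. The main obstacle — such as it is — is not conceptual but a matter of being careful that the potential-function machinery is genuinely source/sink-agnostic: that an improving deviation of one agent changes $\Phi$ by exactly that agent's cost change in the asymmetric setting, and that improving deviations starting from a feasible profile never move to an infeasible (infinite-cost) profile. Both are immediate, so the proof is short.
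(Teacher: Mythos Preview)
Your proposal is correct and follows essentially the same approach as the paper. The paper's own treatment is even terser: it simply remarks that Lemma~\ref{lem:ubscpos} holds verbatim for the asymmetric case (the potential argument being source/sink-agnostic, as you spell out) and that the lower bound is inherited because CSCSG is a special case of CACSG, so your write-up is if anything more explicit than the paper's.
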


 For the max-cost criterion, we show that $\PoS_{mc}$ is at most $n^2$.
 \begin{theorem}\label{thm:CACSG:PoS_mc}
For any CACSG $(\Delta, \mathcal{F}^*)$, $\PoS_{mc}$ is at most $n^2$. 
\end{theorem}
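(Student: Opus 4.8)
The plan is to adapt the potential-function argument from the proof of Theorem \ref{thm:mcpos} to the asymmetric setting, absorbing the loss of the series-parallel/flow structure into an extra factor of $n$. As before, I would start from an optimal strategy profile $\s^*$ under max-cost, rescale the edge costs so that $\cost_{sc}(\s^*)=n^2$ (this is the natural scaling here, since we aim for a bound of $n^2$ and the per-agent optimal max-cost then satisfies $\cost_{mc}(\s^*)\ge n$), and let agents deviate from $\s^*$ to reach a Nash equilibrium $\s$ by Proposition \ref{prop:potential}. If $\cost_{mc}(\s)\le n^2$ we are done by $\cost_{mc}(\s^*)\ge n$; otherwise I want to show that from $\s$ one can reach a strategy profile $\s'$ with strictly smaller potential, and then finish by the usual finiteness-of-strategy-profiles argument.

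The key step is the construction of the improving profile $\s'$. Let agent $i$ be the agent paying $\cost_{mc}(\s)$, and let $\s_{-i}$ be the partial profile of the other $n-1$ agents, with $\Phi(\s_{-i})=\Phi(\s)-\cost_{mc}(\s)$. The difficulty is that in the asymmetric case agent $i$ has its own source-sink pair $(s_i,t_i)$, and Lemma \ref{lem:feasible} (the series-parallel flow lemma) no longer applies; there need not be an $s_i$-$t_i$ path living entirely inside $E(\s^*)$ that is feasible together with $\s_{-i}$. Instead I would simply let agent $i$ re-adopt its own optimal strategy $s^*_i$ from $\s^*$. This is automatically feasible when combined with $\s_{-i}$? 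No — that is exactly the obstacle: $\s_{-i}$ together with $s^*_i$ may violate capacities on edges where $\s_{-i}$ already loads an edge to its capacity. So the cleaner route is: take $\s'$ to be $\s^*$ itself but argue about potentials, or, better, bound the potential of $\s^*$ directly. Indeed $\Phi(\s^*)=\sum_{e}\sum_{x=1}^{x_e(\s^*)}f_e(x)\le\sum_{e\in E(\s^*)} n p_e\le n\cdot\cost_{sc}(\s^*)=n^3$ by property (2'), exactly as in \eqref{inequality:max:bound}; and $\Phi(\s)\le\Phi(\s^*)\le n^3$ since the deviation process only decreases $\Phi$. Combined with $\cost_{mc}(\s)\le\cost_{sc}(\s)\le\Phi(\s)$ this gives $\cost_{mc}(\s)\le n^3$, which is only $\PoS_{mc}\le n^2$ after dividing by $\cost_{mc}(\s^*)\ge n$ — so in fact the crude bound already suffices, and no delicate augmenting-path construction is needed.

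So the cleanest plan is: (1) run the deviation process from $\s^*$ to get a Nash equilibrium $\s$; (2) use the potential monotonicity $\Phi(\s)\le\Phi(\s^*)$; (3) bound $\Phi(\s^*)\le n\cdot\cost_{sc}(\s^*)$ by property (2') exactly as in Lemma \ref{lem:ubscpos}; (4) bound $\cost_{sc}(\s^*)\le n\cdot\cost_{mc}(\s^*)$ by definition of max-cost; (5) use $\cost_{mc}(\s)\le\cost_{sc}(\s)\le\Phi(\s)$ (the middle inequality from non-increasingness of $f_e$, as in Appendix \ref{prf:potential}); and (6) chain these to get $\cost_{mc}(\s)\le\Phi(\s)\le\Phi(\s^*)\le n\cdot\cost_{sc}(\s^*)\le n^2\cdot\cost_{mc}(\s^*)$, hence $\PoS_{mc}(\Delta,\mathcal{F}^*)\le n^2$. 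The main thing to get right is step (5), the inequality $\cost_{sc}(\s)\le\Phi(\s)$: it holds because each agent's payment on an edge $e$ is $f_e(x_e(\s))$, which by non-increasingness is at most $f_e(k)$ for each $k\le x_e(\s)$, so summing over the $x_e(\s)$ agents on $e$ gives at most $\sum_{k=1}^{x_e(\s)}f_e(k)$; summing over edges yields $\cost_{sc}(\s)\le\Phi(\s)$. Everything else is the same bookkeeping as in the symmetric proofs, and the asymmetry costs exactly the one extra factor of $n$ that separates this bound from the $\PoS_{mc}\le n$ of Theorem \ref{thm:mcpos}.
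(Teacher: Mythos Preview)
Your final plan in the last paragraph---steps (1)--(6)---is correct and is exactly the paper's proof: start from $\s^*$, deviate to a Nash equilibrium $\s$, and chain $\cost_{mc}(\s)\le\cost_{sc}(\s)\le\Phi(\s)\le\Phi(\s^*)\le n\cdot\cost_{sc}(\s^*)\le n^2\cdot\cost_{mc}(\s^*)$, citing inequality~\eqref{inequality:max:bound} and Appendix~\ref{prf:potential}. The detour in your first two paragraphs (rescaling, augmenting paths, re-inserting $s^*_i$) is unnecessary and can be dropped entirely; the paper does not rescale and does not attempt any flow construction for the asymmetric case.
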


\begin{proof}
Let $\s^*$ be an optimal strategy profile under max-cost and $\s$ be a Nash equilibrium obtained from  $\s^*$ by the deviations of agents.
By the definition of the potential function, $\Phi(\s)\le \Phi(\s^*)$ holds.
By Equation (\ref{inequality:max:bound}) in Lemma \ref{lem:ubscpos}, $\Phi(\s^*)\le n\cdot\cost_{sc}(\s^*)$ holds.
Because ${\cost_{sc}(\s^*)}/{n} \le \cost_{mc}(\s^*)$, we have  $\Phi(\s)\le n^2\cdot \cost_{mc}(\s^*)$.
Finally, since $\cost_{mc}(\s)\le \Phi(\s)$, $\cost_{mc}(\s)\le n^2\cdot \cost_{mc}(\s^*)$ holds.
Thus, we have: $\PoS_{mc}\le {n^2\cdot\cost_{mc}(\s^*)}/{\cost_{mc}(\s^*)}=n^2$.
 \end{proof}

For the lower bound of $\PoS_{mc}$, Erlebach and Radoja showed that there is a CACSG $(\Delta, {F}_{\ord})$ with $\PoS_{mc}=\Omega(n\log n)$ \cite{Erlebach2015}.
However, there is a gap between $n\log n$ and $n^2$  with respect to $\PoS_{mc}$.



\section{Conclusion}\label{sec:conclusion}
In this paper, we studied the capacitated symmetric/asymmetric cost-sharing  connection game (CSCSG, CACSG) under the generalized cost-sharing scheme that models more realistic scenarios, such as where overhead costs arise.
In particular, we investigated the games from the viewpoint of $\PoA$ and $\PoS$ under two types of social costs: sum-cost and max-cost. 
Despite of the generalization, all the bounds under the ordinary cost-sharing function still hold with one exception; for $\PoS$ of sum-cost,  
we found a substantial difference between the ordinary cost-sharing function and the generalized scheme, where the former is $\log n$, and the latter is $\PoSLsum$. 

In this paper, we mainly focused on CSCSG. For the asymmetric games, however, there is still a gap for $\PoS$ under max-cost. Thus, filling the gap is an interesting open problem. Moreover, it would be worth considering some other concepts of stability, such as Strong Price of Anarchy (SPoA) and Strong Price of Stability (SPoS) to CSCSG and CACSG under the generalized cost-sharing scheme. 



\subsection*{Conflict of Interest}
The authors do not have any conflict of interest in hiring, financial support, or others.
\bibliographystyle{plainurl}
\bibliography{ref}

\end{document}